\documentclass[journal]{IEEEtran}
\IEEEoverridecommandlockouts
\usepackage{cite}
\usepackage{multirow}
\usepackage{amsmath,amssymb,amsfonts,mathtools,bm}
\usepackage{amsthm}
\usepackage{algorithm}
\usepackage{algorithmic}
\usepackage{graphicx}
\usepackage{textcomp}
\usepackage{xcolor,float}
\usepackage{colortbl}
\usepackage{tabularx}
\usepackage{textcomp}
\usepackage{diagbox}
\usepackage{svg}
\usepackage{booktabs}
\usepackage{hyperref}
\usepackage{pifont}

\newtheorem{problem}{Problem}
\newtheorem{assumption}{Assumption}
\usepackage{threeparttable}
\usepackage{booktabs}
\usepackage{tabularx}
\usepackage[utf8]{inputenc}
\usepackage[T1]{fontenc}
\usepackage{float}
\usepackage[labelformat=empty]{subcaption}
\usepackage{graphicx}
\usepackage{multirow} % 在导言区加入
\usepackage{siunitx}
\usepackage{placeins}

\usepackage{orcidlink}
\usepackage[final]{changes} %宏包
\newtheorem{proposition}{Proposition}

\allowdisplaybreaks
\renewcommand{\baselinestretch}{1}

\usepackage{orcidlink}  % 加到导言区

\begin{document}
\title{RadioDiff-v2: Generative Angular Radio Maps for Multi-Beam Selection and Localization}

% \author{Qiming Zhang\orcidlink{0009-0004-4048-4668},
%         Xiucheng Wang\orcidlink{0000-0001-2345-6789},~\IEEEmembership{Student Member,~IEEE,}
%         Nan Cheng\orcidlink{0000-0001-7907-2071},~\IEEEmembership{Senior Member,~IEEE,}}
%         Ruijin Sun,\orcidlink{}~\IEEEmembership{Member,~IEEE,}

\author{
Xiucheng Wang,
Junxi Huang,
Nan Cheng

\thanks{
\par This work was supported by the National Key Research and Development Program of China (2024YFB907500).
\par Xiucheng Wang, Junxi Huang and Nan Cheng are with the State Key Laboratory of ISN and School of Telecommunications Engineering, Xidian University, Xi’an 710071, China (e-mail: \{xcwang\_1, 24012100067\}@stu.xidian.edu.cn; dr.nan.cheng@ieee.org);\textit{(Corresponding author: Nan Cheng.)}.

}

} 

    \maketitle

\IEEEdisplaynontitleabstractindextext

\IEEEpeerreviewmaketitle

\begin{abstract}
Angular radio maps describe the received-power distribution over the angle of arrival and underpin beam selection and receiver localization in sixth-generation (6G) networks. Predicting the angular power spectrum (APS) from geometry is difficult, because the mapping is ill-posed in non-line-of-sight (NLOS) conditions and must generalize to unseen environments. Distortion-minimizing regressors return the conditional mean, which over-smooths the spectrum and erases the multipath structure that downstream tasks need. We cast the task as a perception-distortion problem and propose RadioDiff-v2, a dual-branch one-dimensional diffusion transformer trained with flow matching. It couples periodic angular encoding, adaptive layer-normalization conditioning, a Fourier angular mixer, and joint velocity and clean-signal heads. A per-metric estimator portfolio reads every deployment quantity from this single model, so that samples carry the distribution, the clean-signal head supplies a regression-grade point estimate, Bayes-optimal rules select beams, and the conditional likelihood localizes the receiver. We prove that a concentrated conditional yields a straight probability-flow trajectory that one step integrates exactly, identifying deterministic transport as the correct inductive bias. On a zero-shot test of 99 environments and one million links, RadioDiff-v2 leads every baseline on every metric, with a 0.39 dB Wasserstein-1 distance, per-bin error below the regression baseline, a 2.43 dB eight-beam NLOS sweep loss, and a 20.6-pixel localization error with four base stations. Code is available at \url{https://github.com/UNIC-Lab/RadioDiff-v2}.
\end{abstract}

\begin{IEEEkeywords}
Radio map, angular power spectrum, flow matching, diffusion model, beam selection, localization, 6G.
\end{IEEEkeywords}

\section{Introduction}
\label{sec:intro}

The angular structure of the wireless channel governs how a base station forms beams and where it places nulls. Sixth-generation (6G) systems push toward higher carriers and large antenna arrays, where directional transmission is the dominant source of link gain \cite{6g,docomo20165g}. A station that knows the angular power spectrum (APS) at a candidate receiver location can steer toward the strongest arrival before any pilot exchange. The same angular knowledge underpins beam management, interference avoidance, and device localization in dense cells. A radio map, also called a channel knowledge map, stores such location-dependent channel descriptors over space so that a station can query them offline \cite{li2022radionet,zhang2024generative,sun2024generative}. Most radio-map work targets a scalar received power or path loss \cite{levie2021radiounet,wang2024radiodiff}, whereas the angular radio map is the harder and the more useful object because it exposes direction rather than a single scalar. Beam selection and angle-based localization both consume the full angular profile rather than a single power value, and this paper therefore studies the prediction of the angular radio map from geometry.

This prediction is difficult because multipath propagation governs the mapping from geometry to the angular profile. In a line-of-sight (LOS) link the dominant arrival follows the geometric bearing from receiver to transmitter, so the angular profile is sharply peaked and nearly determined by the receiver position. In a non-line-of-sight (NLOS) link the direct path is blocked, and the received energy arrives through reflection and diffraction around buildings. The angular profile then carries several lobes whose directions depend on the surrounding layout in ways that no closed-form bearing rule captures. A predictor must also generalize to environments it never saw during training, since a deployed map cannot enumerate every city block. This zero-shot requirement rules out memorizing per-environment fingerprints and demands a model that reasons from the building geometry itself. Multipath ambiguity, NLOS ill-posedness, and zero-shot generalization together make the angular radio map an open challenge for learning-based channel modeling.

Existing learning approaches treat the problem as image-to-image regression and minimize a per-pixel reconstruction loss. A convolutional regressor such as RadioUNet trains a U-shaped network with a mean-squared-error objective and produces accurate scalar maps \cite{levie2021radiounet,ronneberger2015u}. A conditional generative adversarial network such as RME-GAN adds an adversarial term to a reconstruction loss to sharpen the outputs \cite{zhang2023rme,creswell2018generative}. The prior diffusion model RadioDiff casts map construction as denoising and reports strong path-loss accuracy with a convolutional backbone \cite{wang2024radiodiff,wang2025radiodiffk}, yet these methods share one assumption that fails for the angular task. A loss that scores each angular bin against the ground truth is minimized in expectation by the conditional mean of the angular profile. The conditional mean of several NLOS lobes is a broad, low-amplitude blur that sits between the true lobes and matches none of them, and this blur is the failure that the present work removes.

This averaging is not a tuning artifact but a structural property of distortion-minimizing training, and it defines the central difficulty we address. Angular radio-map prediction is therefore a perception-distortion problem \cite{blau2018perception}, in which a predictor can minimize per-bin distortion or reproduce the sharp multi-lobe statistics of the true angular profile, but in the NLOS regime it cannot do both. A regressor takes the first option and over-smooths, and the resulting blur collapses the dynamic range, merges distinct arrivals into one diffuse mass, and erases the angular separation between candidate beam directions. A beam chosen from a blurred profile then points at the average of the lobes rather than at any usable arrival, and a localizer that compares angular signatures loses the very sharpness that distinguishes nearby positions. The deterministic generative adversarial network suffers a related failure, because adversarial training on this task collapses onto a narrow set of outputs and does not represent the spread of plausible NLOS profiles. RadioDiff is generative, yet it samples a reverse-time stochastic differential equation that injects fresh noise at every step. This stochastic sampling is a poor match for a channel that is close to deterministic once the geometry is fixed. The task thus needs a generator that is faithful to the conditional distribution without injecting spurious noise.

We propose RadioDiff-v2, a generative model that matches the conditional distribution of the angular radio map rather than its conditional mean, implemented as a one-dimensional diffusion transformer trained with flow matching. It learns an ordinary differential equation that transports a noise prior to the data conditional along a near-straight path \cite{lipman2023flow,liu2023flow,peebles2023scalable}. This deterministic transport is the right inductive bias for a channel that is nearly determined by its environment, and it places probability on the sharp arrivals instead of spreading it. In a LOS link the learned map behaves like a sharp regressor without the averaging that destroys the dominant lobe. In an NLOS link it represents the several plausible arrivals, which yields diverse candidate beams for a sweep and a posterior mean for single-beam pointing, and the same conditional density also serves a second task that regressors cannot perform. Scoring an observed angular profile under each candidate position gives a Bayesian likelihood, so RadioDiff-v2 localizes a receiver by generative maximum-a-posteriori inference \cite{yapar2022locunet,wang2020indoor}. One model thus supplies sharp angular maps, calibrated beam diversity, an accurate point estimate, and a likelihood for localization, all zero-shot to unseen environments. The main contributions of this paper, spanning problem formulation, model design, theoretical analysis, and deployment, are summarized as follows.

\begin{enumerate}
  \item We reframe angular radio-map prediction as a perception-distortion problem. We show that a distortion-minimizing regressor returns the conditional mean of the angular profile. The mean over-smooths the spectrum, collapses its dynamic range, and erases the multipath structure that beam selection and likelihood-based localization require.
  \item We propose RadioDiff-v2, a dual-branch one-dimensional diffusion transformer trained with rectified flow. The design combines a periodic angular positional encoding, adaptive layer-normalization conditioning, a bottleneck adaptive Fourier transform angular mixer, and coupled velocity and clean-signal decoder heads with a learnable fusion gate. We further introduce a per-metric estimator portfolio that reads every deployment quantity from this single model. The sampling readout carries the distribution, the clean-signal head yields a regression-grade point estimate, and Bayes-optimal rules select single beams and diverse sweep sets. Every readout hyper-parameter is selected on validation data and frozen before testing.
  \item We analyze why flow matching fits the angular radio-map task. We prove that, for a concentrated conditional, the flow-matching probability-flow ordinary differential equation follows a straight-line trajectory that one step integrates exactly. A noise-injecting reverse-time stochastic differential equation instead retains residual spread. The result identifies deterministic transport as the correct inductive bias and connects it to the perception-distortion regime.
  \item We use the learned conditional density as a Bayesian likelihood to localize a receiver, a capability regressors cannot provide. On a zero-shot test of 99 environments and one million links, RadioDiff-v2 leads every baseline on every metric. It attains a distributional Wasserstein-1 distance of 0.39 dB against 1.97 for the prior diffusion baseline, a per-bin error of 0.184 against 0.199 for the regression baseline, and an eight-beam NLOS sweep loss of 2.43 dB. The multi-station localization error is 20.6 pixels at four base stations and falls to 17.1 pixels at six, where regressor-map fusion saturates near 41 pixels.
\end{enumerate}

\begin{figure*}[ht]
\centering
\includegraphics[width=0.88\textwidth]{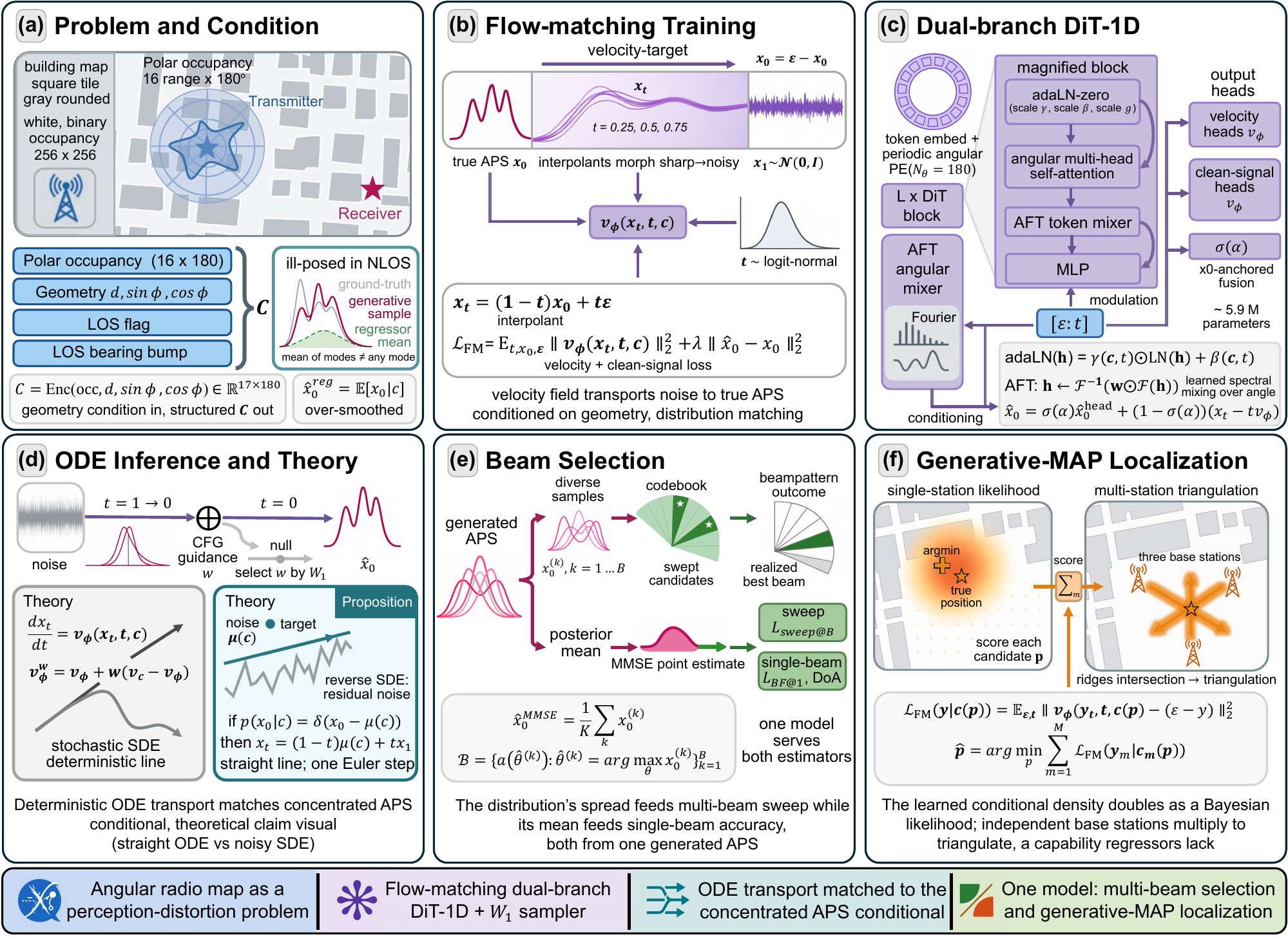}
\caption{Overview of RadioDiff-v2. (a)~The condition is built from the receiver-centred polar occupancy, the geometry, and the LOS flag, and the NLOS conditional is ill-posed for a mean estimator. (b)~Flow matching learns the velocity field that transports noise to the true spectrum under the condition. (c)~The dual-branch DiT-1D combines periodic angular encoding, adaptive layer-normalization conditioning, the AFT angular mixer, and velocity and clean-signal heads with a learnable fusion gate. (d)~The deterministic ODE transport matches the concentrated conditional, in contrast to a noise-injecting SDE. (e)~The estimator portfolio serves the beam sweep through diverse samples and single-beam pointing through the point estimate. (f)~The conditional likelihood localizes the receiver, and independent base stations triangulate.}
\label{fig:overview}
\end{figure*}

\section{Related Work}
\label{sec:related}

\subsection{Radio-Map and Channel-Knowledge-Map Construction}
Radio-map and channel-knowledge-map construction predicts a spatial field of channel quantities from the environment geometry and supports environment-aware operation in 6G networks~\cite{zeng2021toward,wang2026tutorial}. Early learning approaches cast path-loss-map prediction as image-to-image regression. RadioUNet trained a convolutional U-shaped network that maps a building layout and a transmitter location to a dense path-loss map under a mean-squared-error objective~\cite{levie2021radiounet}. The conditional generative adversarial network RME-GAN instead paired a reconstruction loss with an adversarial loss to sharpen the estimated field~\cite{zhang2023rme}. The RadioDiff family reframed the task as conditional generation, adopted a denoising diffusion probabilistic model with a convolutional backbone, and improved perceptual sharpness over the regressors~\cite{wang2024radiodiff}. Subsequent variants have extended this line in three directions. The first injected electromagnetic priors and inverse-problem structure into the diffusion process for multipath-aware and integrated sensing settings~\cite{wang2025radiodiffk,wang2025radiodiffinverse}. The second carried the diffusion prior to indoor construction-and-localization and to few-shot regimes through physics-informed manifold alignment~\cite{wang2025iradiodiff,wang2026radiodifffs}. The third reused diffusion-trajectory midpoints to cut inference latency~\cite{wang2026radiodiffflux}. A parallel effort has broadened coverage with graph-based reconstruction and large datasets that frame channel-knowledge maps as a computer-vision problem, including three-dimensional radio-map benchmarks across heights~\cite{li2024radiogat,jaensch2024radio,zhang2024generative,wu2024ckmimagenet,wang2026radiodiff3d}. These methods share a scalar path-loss target on a Cartesian grid, and they optimize per-pixel agreement with a ground-truth map. This per-pixel emphasis over-smooths the angular fine structure of the channel, which is exactly the structure that multi-beam selection and likelihood-based localization require. RadioDiff-v2 instead targets the dB APS and matches its conditional distribution, recovering the sharp multipath lobes that the prior estimators wash out.

\subsection{Generative Models}
Diffusion and flow-based generators define the methodological backdrop for distribution-faithful prediction. A denoising diffusion probabilistic model learns to invert a fixed forward noising chain and samples through a reverse-time stochastic process~\cite{ho2020denoising}. Deterministic and score-based formulations have recast sampling as an ordinary or stochastic differential equation and accelerated inference~\cite{song2020denoising,song2020score}. Latent and text-guided variants have scaled generation to high resolution and rich conditions~\cite{LDM,nichol2021glide}. Flow matching and rectified flow take a complementary route and regress a velocity field along a straight interpolant between the noise prior and the data. This construction yields a probability-flow ordinary differential equation with short, low-curvature trajectories~\cite{lipman2023flow,liu2023flow}. Generative adversarial networks reach sharp samples through a min-max game but are prone to mode collapse~\cite{creswell2018generative}. Decoupled diffusion designs have reshaped the transition kernel to ease few-step sampling~\cite{huang2023decoupled}. Transformer backbones supply scalable conditioning, and the diffusion transformer pairs attention with adaptive normalization for strong generative quality~\cite{vaswani2017attention,peebles2023scalable}. Broad vision surveys have documented this progression in detail~\cite{croitoru2023diffusion}. These generators are designed for high-entropy image priors, whereas the APS is a near-deterministic physical quantity given the geometry. RadioDiff-v2 therefore adopts flow-matching ordinary-differential-equation transport on a one-dimensional diffusion transformer, which fits this concentrated conditional and avoids the residual stochastic spread of noise-injecting samplers.

\subsection{Radio-Map Localization and Fingerprinting}
Localization from radio observations is a long-standing application of radio maps. Fingerprinting matches a measured signature against a stored database and returns the position of the nearest entry. A Gaussian-process map often interpolates the radio field between the stored entries~\cite{wang2020indoor}. Deep variants learn the matching directly. LocUNet regressed a receiver position from path-loss radio maps with a convolutional network~\cite{yapar2022locunet}. Further studies have mined radio fingerprints for simultaneous localization and mapping or for robust device identification~\cite{liu2023exploiting,li2022radionet}. These approaches share a discriminative or nearest-neighbour view. They compare an observed signature to references under a fixed metric, and they cannot evaluate how probable an observation is at a candidate position. A regressor therefore offers no conditional likelihood, and a fingerprint match degrades once the observed angular structure is smoothed away. RadioDiff-v2 closes this gap by scoring each candidate position with the flow-matching loss of the observed spectrum under that position's condition. This scoring turns the learned conditional density into a Bayesian likelihood for generative maximum-a-posteriori localization~\cite{kang2025confidence}. Per-station likelihoods multiply across base stations, so the estimate triangulates as coverage grows and surpasses even an oracle fingerprint built from the true spectra.

\section{System Model and Problem Formulation}
\label{sec:sysmodel}

We consider a single-frequency wireless environment served by multiple base stations and described by a binary building map. The map is a $256\times256$ image whose pixels mark free space or building interior, and all coordinates are expressed in pixels over $[0,256]^2$. Each environment contains on the order of one hundred transmitter sites and tens of thousands of receiver locations. Propagation between any transmitter and any receiver is governed by LOS paths, reflections, and diffractions around the buildings. For a fixed transmitter and receiver, the directional structure of the received field is summarized by the APS, which records how the arriving power is distributed over the azimuth at the receiver. The APS encodes the directions of the dominant multipath components, so beam selection and angle-based localization both depend on it. This section defines the APS, the conditioning information available to the model, the zero-shot evaluation setting, and the two downstream tasks. Each symbol is introduced where it first appears, and the principal symbols are collected in Table~\ref{tab:notation}.

The generation target is the dB-domain APS at a receiver, written as $\bm{x}_0\in\mathbb{R}^{N_\theta}$. The azimuth is discretized into $N_\theta=180$ angular bins of one degree, and the $n$-th bin is centered at angle $\theta_n$ for $n=1,\dots,N_\theta$. Let $P(\theta_n)$ denote the linear received power arriving within the $n$-th bin, obtained by aggregating the multipath contributions whose angle of arrival falls in that bin. The $n$-th entry of the target is the normalized power in decibels,
\begin{align}
  [\bm{x}_0]_n = \frac{10\log_{10}\!\big(P(\theta_n)+\varepsilon\big) - P_{\min}}{P_{\max}-P_{\min}},
  \label{eq:aps_def}
\end{align}
where $\varepsilon$ is a small constant that bounds the logarithm in deep nulls, and $P_{\min}$ and $P_{\max}$ are fixed dynamic-range limits applied across the dataset. The angular power varies over several orders of magnitude, so the dB domain is adopted to expose the weak off-axis lobes that a linear scale would suppress. The mapping in~\eqref{eq:aps_def} is monotone, so the angle of the dominant lobe and the angular spread of $\bm{x}_0$ are both preserved. The APS is periodic in the azimuth, and the angular positional encoding introduced in Sec.~\ref{sec:method} respects this periodicity.

\begin{table}[ht]
\centering
\renewcommand{\arraystretch}{1.3}
\caption{Principal notation.}
\label{tab:notation}
\resizebox{\columnwidth}{!}{
\begin{tabular}{@{}l|l@{}}
\hline
Symbol & Meaning \\
\hline
$\bm{x}_0\in\mathbb{R}^{N_\theta}$ & Clean dB-domain angular power spectrum (target) \\
$N_\theta=180$ & Number of angular bins over the azimuth \\
$\theta_n$ & Azimuth angle of the $n$-th bin \\
$\bm{c}$ & Conditioning input (all items below jointly) \\
$\bm{B}\in\{0,1\}^{16\times180}$ & Rx-centred polar building-occupancy map \\
$\bm{p}_{\mathrm{tx}},\bm{p}_{\mathrm{rx}}\in[0,256]^2$ & Transmitter and receiver pixel coordinates \\
$\bm{g}$ & Tx/Rx geometry features \\
$\ell\in\{0,1\}$ & Line-of-sight flag; LOS angle withheld \\
$p(\bm{x}_0\mid\bm{c})$ & Conditional law of the APS given $\bm{c}$ \\
$\bm{\mu}(\bm{c})=\mathbb{E}[\bm{x}_0\mid\bm{c}]$ & Conditional mean (the regressor target) \\
$\bm{\epsilon}\sim\mathcal{N}(\bm{0},\bm{I})$ & Gaussian noise prior sample \\
$t\in[0,1]$ & Flow-matching (rectified-flow) time \\
$\bm{x}_t$ & Rectified-flow linear interpolant \\
$\bm{v}=\bm{\epsilon}-\bm{x}_0$ & Target velocity field \\
$\bm{v}_\phi(\bm{x}_t,t,\bm{c})$ & Network velocity prediction (parameters $\phi$) \\
$\hat{\bm{x}}_0$ & Network clean-signal prediction \\
$\sigma(\alpha)$ & Learnable $\bm{x}_0$-fusion gate \\
$w$ & Classifier-free guidance scale; null token $\varnothing$ \\
$\mathcal{W}_1(\cdot,\cdot)$ & Wasserstein-1 distance (dB domain) \\
$\bm{a}(\theta)$ & Array steering vector at angle $\theta$ \\
$\mathcal{B}=\{\bm{b}_1,\dots,\bm{b}_B\}$ & Set of $B$ candidate beams \\
$L_{\mathrm{sweep}@B}$ & Oracle-best minus realized beamforming gain (dB) \\
$G(\theta)$ & Beamforming gain at bearing $\theta$ (dB) \\
$\bm{p}^\star$ & Estimated receiver position \\
$\mathcal{L}_{\mathrm{FM}}(\bm{y}\mid\bm{c})$ & Flow-matching score of $\bm{y}$ under $\bm{c}$ \\
$M$ & Number of base stations heard \\
\hline
\end{tabular}}
\end{table}

The conditioning input collects everything the model is allowed to observe and is denoted by $\bm{c}$. Its first component is a receiver-centered, angle-aligned polar building-occupancy map $\bm{B}\in\{0,1\}^{16\times180}$, whose rays are cast from the receiver and record building occupancy over $16$ range cells and $180$ angular bins. The polar map shares its angular axis with the APS, so a blocked bearing in $\bm{B}$ aligns with the corresponding angular bin of $\bm{x}_0$. The second component is the geometry, namely the transmitter and receiver pixel coordinates $\bm{p}_{\mathrm{tx}},\bm{p}_{\mathrm{rx}}\in[0,256]^2$ together with derived features $\bm{g}$ that include the relative position and the transmitter-receiver distance. The third component is a binary LOS flag $\ell\in\{0,1\}$ that states whether an unobstructed path exists. The LOS bearing itself is withheld from $\bm{c}$, because revealing it would expose the dominant direction in LOS links and weaken the evaluation. The condition therefore carries occupancy, coarse geometry, and a visibility bit, but never the target angle.

The mapping from the condition to the APS is not deterministic under the observed information. Given the exact scatterer positions and material properties, the APS would be fixed by electromagnetic propagation~\cite{zhou2017electromagnetic,deschamps1972ray}. The condition $\bm{c}$ exposes only a binary occupancy map and coarse geometry, so several physical environments are consistent with the same $\bm{c}$, and the APS therefore retains a small but genuine residual uncertainty. We model this residual uncertainty by treating $\bm{x}_0$ as a sample from a conditional law $p(\bm{x}_0\mid\bm{c})$ that is concentrated for LOS links and admits limited multi-modality for NLOS links. The conditional mean is $\bm{\mu}(\bm{c})=\mathbb{E}[\bm{x}_0\mid\bm{c}]$, and a distortion-minimizing regressor learns to output exactly this quantity~\cite{levie2021radiounet}. Recovering the full conditional law rather than its mean is the central requirement of the angular radio-map task. The multipath structure that beam selection and localization need lives in the modes of $p(\bm{x}_0\mid\bm{c})$, and averaging over the modes erases it. The proposed method therefore targets $p(\bm{x}_0\mid\bm{c})$ directly through generative modeling.

Generalization is evaluated zero-shot across environments, and the $99$ environments~\cite{huang2026map2aps} are partitioned into $79$ disjoint training environments and $20$ disjoint test environments, so that no test building map appears during training~\cite{alkhateeb2019deepmimo}. Training draws links from the training environments, an in-distribution validation set is held out for monitoring, and the reported test set consists only of links from the unseen test environments. This split measures whether a model learns transferable propagation structure rather than memorizing the training geometries. A subset of receivers is heard by three or more transmitters, which gives a multi-base-station setting and enables the fusion used for localization. The zero-shot protocol is therefore the operating regime of interest, because a deployable radio-map model must serve environments for which no measurements were collected, and all quantities defined below are evaluated under this protocol.

The first downstream task is beam selection in the NLOS regime. In this regime the direct path is blocked, and the optimal beam points along a reflected or diffracted bearing rather than the geometric transmitter direction~\cite{wahl2005dominant}. A beam is a steering vector $\bm{a}(\theta)$ from an angular codebook, and its realized beamforming gain is the APS power collected along $\theta$. A generative model proposes a set of $B$ candidate beams $\mathcal{B}=\{\bm{b}_1,\dots,\bm{b}_B\}$, and the receiver sweeps them. The relevant loss is therefore the gap between the oracle-best beam and the best realized beam over the set,
\begin{align}
  L_{\mathrm{sweep}@B}(\mathcal{B}) = \max_{\theta} G(\theta) - \max_{\bm{b}\in\mathcal{B}} G(\bm{b}),
  \label{eq:sweep_loss}
\end{align}
where $G(\theta)$ denotes the beamforming gain at bearing $\theta$ under the true APS, measured in decibels. We formalize the task as follows.
\begin{problem}[Generative beam sweep]
\label{prob:beam}
Given the condition $\bm{c}$, produce $B$ candidate beams that minimize the expected sweep loss,
\begin{align}
  \min_{\mathcal{B}}\ \mathbb{E}\big[L_{\mathrm{sweep}@B}(\mathcal{B})\,\big|\,\bm{c}\big],
  \quad |\mathcal{B}|=B.
  \label{eq:beam_problem}
\end{align}
\end{problem}
A small sweep loss requires candidate beams that are both accurate and diverse, so a model that collapses to a single bearing wins only when $B=1$. The angular spread of the predicted APS is reported alongside~\eqref{eq:sweep_loss}, because preserving the spread supplies the diversity that the sweep rewards~\cite{zhang2023rme}.

The second downstream task is receiver localization from the observed APS, which we cast as a Bayesian estimation problem over candidate positions. Let $\bm{y}$ denote an observed APS measured at an unknown receiver. Let $\bm{c}(\bm{p})$ denote the condition that the known building map induces for a candidate position $\bm{p}$. A generative model assigns each candidate the conditional likelihood of the observation under that position. We summarize this likelihood by the flow-matching score $\mathcal{L}_{\mathrm{FM}}(\bm{y}\mid\bm{c}(\bm{p}))$ defined in Sec.~\ref{sec:method}. The estimate is the position whose induced condition best explains the observation,
\begin{align}
  \bm{p}^\star = \arg\min_{\bm{p}}\ \mathcal{L}_{\mathrm{FM}}\big(\bm{y}\,\big|\,\bm{c}(\bm{p})\big).
  \label{eq:loc_map}
\end{align}
When the receiver is heard by $M$ base stations, the per-station likelihoods are conditionally independent given the position and therefore combine additively in the log domain. We state the task as follows.
\begin{problem}[Generative maximum-a-posteriori localization]
\label{prob:loc}
Given $M$ observed spectra $\{\bm{y}_m\}_{m=1}^{M}$ and the building map, estimate the receiver position by
\begin{align}
  \bm{p}^\star = \arg\min_{\bm{p}}\ \sum_{m=1}^{M}\mathcal{L}_{\mathrm{FM}}\big(\bm{y}_m\,\big|\,\bm{c}_m(\bm{p})\big),
  \label{eq:loc_problem_multi}
\end{align}
where $\bm{c}_m(\bm{p})$ is the condition induced for the $m$-th base station.
\end{problem}
Each station contributes a likelihood ridge along its bearing, and the ridges intersect at the true position, so the estimate sharpens as $M$ grows. A regressor returns only the conditional mean and provides no likelihood, so it cannot score~\eqref{eq:loc_map} and cannot perform this task~\cite{yapar2022locunet}. Problems~\ref{prob:beam} and~\ref{prob:loc} therefore both reduce to modeling the conditional law $p(\bm{x}_0\mid\bm{c})$, which is the objective of the method developed next.

\section{Proposed Method: RadioDiff-v2}
\label{sec:method}

RadioDiff-v2 casts APS prediction as conditional generation rather than regression, and the model receives the condition $\bm{c}$ and learns to draw samples from the conditional law $p(\bm{x}_0 \mid \bm{c})$, where $\bm{x}_0 \in \mathbb{R}^{N_\theta}$ is the dB-domain APS over $N_\theta=180$ angular bins. The generator is a dual-branch one-dimensional diffusion transformer trained by flow matching \cite{lipman2023flow}, and Fig.~\ref{fig:overview} shows the full pipeline. The condition is encoded once and injected into every transformer block, and the network predicts both a velocity field and a clean signal. An ordinary differential equation (ODE) sampler then integrates the velocity from a Gaussian prior to an APS sample. Because the conditional density is modeled rather than its mean, one trained network serves three needs at once. It supplies a posterior mean for point estimation, a set of diverse samples for beam selection, and a conditional likelihood for localization. This section develops the training objective, the backbone, the distribution-aware sampler selection, the theoretical motivation, and the two deployment estimators in turn.

\subsection{Flow-Matching Formulation}
\label{sec:method_fm}

Flow matching learns a time-dependent velocity field whose probability-flow ODE transports a Gaussian prior to the data conditional \cite{lipman2023flow}. We adopt the rectified-flow interpolant, which connects a clean APS $\bm{x}_0$ and a noise sample $\bm{\epsilon}\sim\mathcal{N}(\bm{0},\bm{I})$ along a straight segment \cite{liu2023flow}. For a flow time $t\in[0,1]$, the interpolant and its target velocity are
\begin{align}
  \bm{x}_t &= (1-t)\,\bm{x}_0 + t\,\bm{\epsilon},
  \label{eq:interpolant} \\
  \bm{v} &= \frac{\mathrm{d}\bm{x}_t}{\mathrm{d}t} = \bm{\epsilon} - \bm{x}_0.
  \label{eq:target_velocity}
\end{align}
The target velocity in~\eqref{eq:target_velocity} is constant along each interpolation path. The network $\bm{v}_\phi(\bm{x}_t,t,\bm{c})$ with parameters $\phi$ regresses this velocity from the noisy state, the time, and the condition. The training objective is the conditional flow-matching loss
\begin{align}
  \mathcal{L}_{\mathrm{FM}}(\phi) = \mathbb{E}_{\bm{x}_0,\bm{\epsilon},t,\bm{c}}
  \big\lVert \bm{v}_\phi(\bm{x}_t,t,\bm{c}) - (\bm{\epsilon}-\bm{x}_0) \big\rVert_2^2,
  \label{eq:fm_loss}
\end{align}
with $t$ drawn uniformly on $[0,1]$ and $\bm{x}_t$ formed by~\eqref{eq:interpolant}. At inference, samples follow the probability-flow ODE
\begin{align}
  \frac{\mathrm{d}\bm{x}_t}{\mathrm{d}t} = \bm{v}_\phi(\bm{x}_t,t,\bm{c}),
  \qquad \bm{x}_1 = \bm{\epsilon},
  \label{eq:pf_ode}
\end{align}
integrated backward from $t=1$ to $t=0$. The regression target in~\eqref{eq:fm_loss} differs fundamentally from a per-bin reconstruction loss on $\bm{x}_0$. A reconstruction loss collapses the conditional onto its mean $\bm{\mu}(\bm{c})$, whereas~\eqref{eq:fm_loss} fits a transport map that preserves the conditional spread. This distinction is the source of the perception-distortion behavior analyzed in Sec.~\ref{sec:method_theory}.

\subsection{Dual-Branch DiT-1D Backbone}
\label{sec:method_backbone}

The backbone is a stack of one-dimensional transformer blocks operating over the $N_\theta$ angular tokens \cite{peebles2023scalable}, and each block applies multi-head self-attention and a feed-forward network with residual connections \cite{vaswani2017attention}. On top of this stack the design adds five components, and each is justified by the failure it prevents. The components are a periodic angular encoding, an adaptive normalization condition path, a Fourier angular mixer, coupled decoder heads, and classifier-free guidance, and we describe them in turn before assembling them into the sampler.

The first component is a periodic angular positional encoding, because the azimuth axis is circular, so bin $1$ and bin $N_\theta$ are neighbors. A standard linear positional encoding places a discontinuity at the wrap-around, which fragments lobes that straddle the $0$-degree boundary. We therefore encode each bin index $n$ with sinusoids of its angle $\theta_n$, so the encoding stays continuous across the seam. Without this periodic encoding, the model splits a single physical lobe into two partial lobes at the array boundary, and the dominant direction of arrival is mislocated.

The second component injects the condition through adaptive layer normalization in the adaLN-zero form \cite{peebles2023scalable}, where the condition $\bm{c}$ collects the polar building-occupancy map $\bm{B}$, the geometry features $\bm{g}$, and the LOS flag $\ell$. A condition encoder maps this input to per-block scale and shift parameters, and each block modulates its normalized activations by these parameters, so the geometry steers generation at every depth. The residual branches are initialized to zero, which makes each block an identity map at the start of training and stabilizes optimization. Concatenating the condition to the input instead would dilute it across depth, and the geometry signal would weaken in the deeper blocks where the lobe structure is resolved.

The third component is an adaptive Fourier transform (AFT) angular mixer at the network bottleneck. The APS is naturally sparse in the angular frequency domain, since a small number of multipath clusters produces a few dominant lobes. The AFT mixer transforms the bottleneck tokens to the angular spectral domain, applies a learned complex filter, and transforms back. This gives the model a global angular receptive field at low cost and a direct handle on the sparse spectral structure. Without the AFT mixer, the purely local attention smooths fine angular detail, and the recovered lobes are wider than the true ones.

The fourth component is a pair of decoder heads with a learnable fusion gate. One head predicts the velocity $\bm{v}_\phi$, and the other predicts the clean signal $\hat{\bm{x}}_0$ directly. The two predictions are tied through the interpolant, since~\eqref{eq:interpolant} and~\eqref{eq:target_velocity} give a velocity-implied clean signal $\hat{\bm{x}}_0^{v} = \bm{x}_t - t\,\bm{v}_\phi$. A learnable gate $\sigma(\alpha)\in(0,1)$ fuses the two estimates,
\begin{align}
  \hat{\bm{x}}_0^{\mathrm{fuse}} = \sigma(\alpha)\,\hat{\bm{x}}_0
  + \big(1-\sigma(\alpha)\big)\,\hat{\bm{x}}_0^{v},
  \label{eq:x0_fusion}
\end{align}
where $\sigma(\cdot)$ is the logistic function. The velocity head carries the transport that fits the distribution, while the direct $\hat{\bm{x}}_0$ head anchors the sample to a physically valid spectrum near $t=0$. The gate lets the model trust the direct head where it is reliable and the velocity head elsewhere. With the velocity head alone, the few-step trajectory accumulates error near the data end and the sharp peak is blunted.

The fifth component is classifier-free guidance with a learned null condition \cite{ho2022classifier}. During training, the condition is replaced by a learned null token $\varnothing$ with a fixed probability, so the network learns both the conditional and the unconditional velocity. At inference, the guided velocity extrapolates between the two,
\begin{align}
  \tilde{\bm{v}}_\phi(\bm{x}_t,t,\bm{c})
  = \bm{v}_\phi(\bm{x}_t,t,\varnothing)
  + w\big[\bm{v}_\phi(\bm{x}_t,t,\bm{c}) - \bm{v}_\phi(\bm{x}_t,t,\varnothing)\big],
  \label{eq:cfg}
\end{align}
with a guidance scale $w$. A larger $w$ sharpens the samples toward the conditional mode and trades sample diversity for fidelity to the dominant lobe. The role of $w$ is examined in Sec.~\ref{sec:method_sampler}.

These five components feed a sampler that integrates the guided probability-flow ODE in $K$ steps and anchors each step on the clean-signal estimate. At step $k$ with time $t_k$, the update reads the fused clean signal from~\eqref{eq:x0_fusion}, recomputes the guided velocity from~\eqref{eq:cfg}, and advances the state toward $t=0$. Anchoring on $\hat{\bm{x}}_0^{\mathrm{fuse}}$ keeps every intermediate state on the manifold of valid spectra, which matters under the small step budgets used here, and Algorithm~\ref{alg:sampler} summarizes the full procedure. A plain velocity-only Euler sampler drifts off the spectrum manifold within a few steps, which widens the lobes and inflates the distributional error.

\begin{algorithm}[ht]
\caption{$\bm{x}_0$-anchored ODE sampler for RadioDiff-v2.}
\label{alg:sampler}
\begin{algorithmic}[1]
\REQUIRE Condition $\bm{c}$, steps $K$, guidance scale $w$, schedule $1=t_0>\dots>t_K=0$
\ENSURE APS sample $\hat{\bm{x}}_0$
\STATE Draw $\bm{x}_{t_0}\sim\mathcal{N}(\bm{0},\bm{I})$
\FOR{$k=0,1,\dots,K-1$}
  \STATE Compute heads $\bm{v}_\phi$ and $\hat{\bm{x}}_0$ at $(\bm{x}_{t_k},t_k,\bm{c})$ and at $(\bm{x}_{t_k},t_k,\varnothing)$
  \STATE Form the guided velocity $\tilde{\bm{v}}_\phi$ by~\eqref{eq:cfg}
  \STATE Form the fused clean signal $\hat{\bm{x}}_0^{\mathrm{fuse}}$ by~\eqref{eq:x0_fusion}
  \STATE Set $\bm{x}_{t_{k+1}} \gets t_{k+1}\,\big(\bm{x}_{t_k}+(t_{k+1}-t_k)\,\tilde{\bm{v}}_\phi\big)/t_k + (1-t_{k+1}/t_k)\,\hat{\bm{x}}_0^{\mathrm{fuse}}$ \quad\COMMENT{$\bm{x}_0$-anchored step}
  \ENDFOR
\STATE $\hat{\bm{x}}_0 \gets \hat{\bm{x}}_0^{\mathrm{fuse}}$
\STATE \textbf{return} $\hat{\bm{x}}_0$
\end{algorithmic}
\end{algorithm}

\subsection{Distribution-Aware Sampler Selection}
\label{sec:method_sampler}

The guidance scale $w$ is an inference-time control, so it is selected after training without retraining. The usual practice tunes $w$ to minimize a per-bin distortion such as normalized mean-squared error (NMSE). This choice is misaligned with the angular tasks, because a distortion-optimal $w$ pushes the sampler toward the conditional mean and erases the multipath spread. We instead select $w$ by the Wasserstein-1 distance $\mathcal{W}_1$ between the distribution of generated APS values and the distribution of true APS values in the dB domain \cite{peyre2019computational}. The selected scale is
\begin{align}
  w^\star = \arg\min_{w}\; \mathcal{W}_1\big(p_w(\bm{x}_0),\, p(\bm{x}_0)\big),
  \label{eq:w1_selection}
\end{align}
where $p_w$ is the sampling law under guidance scale $w$. This criterion targets distributional fidelity directly. It restores the dynamic range and the lobe sharpness that the NMSE-optimal scale destroys, and it does so purely at inference. The trade-off governed by $w$ is quantified in Sec.~\ref{sec:experiments}.

\subsection{Theoretical Analysis}
\label{sec:method_theory}

The choice of an ODE transport over a noise-injecting reverse process is not incidental, because given the transmitter, the receiver, and the environment geometry, the APS is a near-deterministic physical quantity, so the conditional $p(\bm{x}_0 \mid \bm{c})$ is concentrated. We make this concentration precise and show that flow matching is the matched inductive bias. The argument rests on two assumptions, the first of which fixes the conditional to a deterministic limit while the second grants the optimal velocity.

\begin{assumption}[Concentrated conditional]
\label{assump:concentrated}
The conditional law of the APS given the condition is a Dirac mass at a deterministic map, $p(\bm{x}_0 \mid \bm{c}) = \delta(\bm{x}_0 - \bm{\mu}(\bm{c}))$. This is the exact limit of low aleatoric entropy and is approached most closely in the LOS regime.
\end{assumption}

\begin{assumption}[Optimal velocity]
\label{assump:optimal_velocity}
The network attains the minimizer of the flow-matching objective in~\eqref{eq:fm_loss}, so $\bm{v}_\phi$ equals the conditional expectation of the target velocity, $\bm{v}_\phi(\bm{x}_t,t,\bm{c}) = \mathbb{E}[\bm{\epsilon}-\bm{x}_0 \mid \bm{x}_t,\bm{c}]$.
\end{assumption}

\begin{proposition}[Straight-line transport for concentrated conditionals]
\label{prop:straight_line}
Under Assumptions~\ref{assump:concentrated} and~\ref{assump:optimal_velocity}, the flow-matching probability-flow ODE in~\eqref{eq:pf_ode} has the closed-form trajectory
\begin{align}
  \bm{x}_t = (1-t)\,\bm{\mu}(\bm{c}) + t\,\bm{x}_1,
  \label{eq:straight_line}
\end{align}
which is a straight line from the prior sample $\bm{x}_1$ to $\bm{\mu}(\bm{c})$. A single Euler step from $t=1$ to $t=0$ integrates~\eqref{eq:pf_ode} exactly, so the discretization error is zero. A reverse-time stochastic differential equation (SDE) with the same marginals instead injects fresh Gaussian noise of fixed per-step variance and does not share this property.
\end{proposition}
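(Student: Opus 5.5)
The plan is to compute the optimal velocity in closed form under Assumptions~\ref{assump:concentrated} and~\ref{assump:optimal_velocity}, solve the resulting ODE, and then check the Euler step and the SDE contrast separately.

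\emph{Closed-form velocity.} Under Assumption~\ref{assump:concentrated}, conditioning on $\bm{c}$ fixes $\bm{x}_0=\bm{\mu}(\bm{c})$ almost surely. Given $(\bm{x}_t,\bm{c})$ with $t\in(0,1]$, the interpolant~\eqref{eq:interpolant} then determines the noise uniquely as $\bm{\epsilon}=(\bm{x}_t-(1-t)\bm{\mu}(\bm{c}))/t$. The conditional expectation in Assumption~\ref{assump:optimal_velocity} therefore collapses to a deterministic function:
\begin{align}
  \bm{v}^\star(\bm{x}_t,t,\bm{c}) = \frac{\bm{x}_t-(1-t)\bm{\mu}(\bm{c})}{t}-\bm{\mu}(\bm{c}) = \frac{\bm{x}_t-\bm{\mu}(\bm{c})}{t}.
\end{align}

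\emph{Solving the ODE.} Substituting into~\eqref{eq:pf_ode} gives the linear ODE $\mathrm{d}\bm{x}_t/\mathrm{d}t=(\bm{x}_t-\bm{\mu})/t$ on $(0,1]$ with $\bm{x}_1=\bm{\epsilon}$. Setting $\bm{z}_t=\bm{x}_t-\bm{\mu}$ gives $\bm{z}_t'=\bm{z}_t/t$, whose solutions are $\bm{z}_t=t\,\bm{z}_1$. This yields~\eqref{eq:straight_line}. For uniqueness, the vector field is Lipschitz in $\bm{x}$ on every interval $[\delta,1]$ with $\delta>0$, so Picard--Lindel\"of applies there. The limit $t\to0$ is then taken by continuity, and it lands on $\bm{\mu}(\bm{c})$.

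\emph{Exactness of one Euler step.} The velocity is constant along the trajectory: $\bm{v}^\star(\bm{x}_t,t,\bm{c})=\bm{x}_1-\bm{\mu}$ for all $t$. A single Euler step from $t=1$ to $t=0$ therefore returns $\bm{x}_1-1\cdot\bm{v}^\star(\bm{x}_1,1,\bm{c})=\bm{\mu}(\bm{c})$, which is the exact endpoint. More generally, any Euler grid reproduces~\eqref{eq:straight_line} at every node. Since the local truncation error of Euler is controlled by $\mathrm{d}^2\bm{x}_t/\mathrm{d}t^2$, which vanishes here, the discretization error is zero.

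\emph{SDE contrast.} I would take the reverse-time SDE sharing these marginals in the form $\mathrm{d}\bm{x}=[\bm{v}^\star-\tfrac{g(t)^2}{2}\nabla\log p_t]\,\mathrm{d}t+g(t)\,\mathrm{d}\bar{\bm{w}}$. The marginal is $p_t=\mathcal{N}((1-t)\bm{\mu},t^2\bm{I})$, so the score is explicit. For a single Euler--Maruyama step, or any finite-step discretization with per-step noise variance $g^2\Delta t$ bounded below, the final iterate equals a deterministic affine function of the previous state plus an independent Gaussian term of nonzero covariance. It therefore cannot equal $\bm{\mu}(\bm{c})$ almost surely, and the one-step error has strictly positive variance.

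The main obstacle is this last part. It needs a precise choice of which "SDE with the same marginals" and which discretization is meant, and I would commit to Euler--Maruyama with a fixed diffusion coefficient $g>0$. The rest is a direct computation. The only technical care needed there is the $t\to0$ singularity of $\bm{v}^\star$, which the explicit solution sidesteps.
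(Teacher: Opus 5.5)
Your proposal is correct and follows the paper's argument. The Dirac conditional makes the target velocity a deterministic function of $(\bm{x}_t,\bm{c})$, so the optimal velocity equals the constant $\bm{x}_1-\bm{\mu}(\bm{c})$ along the trajectory, one Euler step lands exactly on $\bm{\mu}(\bm{c})$, and the SDE's non-vanishing per-step noise leaves residual spread. Your version is tighter than the paper's on two points: writing the field as $(\bm{x}-\bm{\mu}(\bm{c}))/t$ and applying Picard--Lindel\"of on $[\delta,1]$ shows that the ODE solution coincides with the interpolant line (the paper's ``constant right-hand side'' holds only along that line), and your Euler--Maruyama computation makes the SDE contrast precise.
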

\begin{figure*}[ht]
\centering
\includegraphics[width=0.88\textwidth]{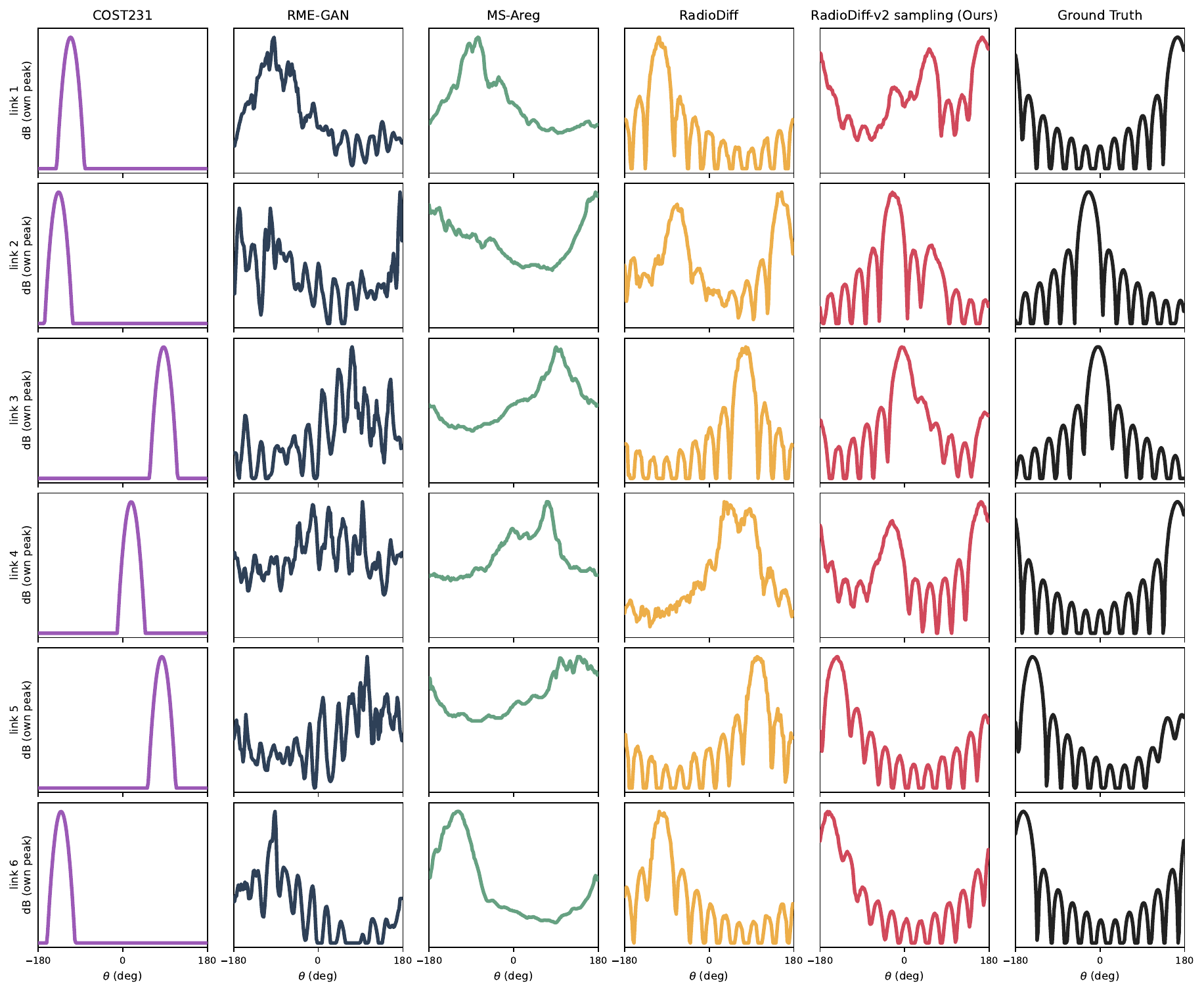}
\caption{Qualitative comparison of predicted angular power spectra on NLOS links, shown in the dB domain with each curve normalized to its own peak, so 0 dB marks that curve's dominant direction regardless of method. Columns from left to right are COST231, RME-GAN, MS-Areg, RadioDiff, the RadioDiff-v2 sampling readout, and the ground truth. The COST231 beam is steered to the geometric bearing, which under NLOS conditions points at the blocked direct path and lands at the wrong angle. The RadioDiff-v2 sampling readout consistently places its peak at the correct angle, whereas RME-GAN and MS-Areg stay broad and noisy near their own peak with no clean dominant lobe even after this per-curve referencing; the per-bin accuracy of the RadioDiff-v2 point readout is reported in Table~\ref{tab:main}.}
\label{fig:aps_qualitative}
\end{figure*}
\begin{IEEEproof}
Under Assumption~\ref{assump:concentrated} the only clean signal consistent with the condition is $\bm{x}_0=\bm{\mu}(\bm{c})$. Substituting into~\eqref{eq:interpolant} gives $\bm{x}_t=(1-t)\bm{\mu}(\bm{c})+t\bm{\epsilon}$, hence $\bm{x}_t$ is an affine function of $t$ and~\eqref{eq:straight_line} holds with $\bm{x}_1=\bm{\epsilon}$. Under Assumption~\ref{assump:optimal_velocity} the optimal velocity is the conditional mean of the target, and because the target $\bm{\epsilon}-\bm{\mu}(\bm{c})$ is determined by $\bm{x}_t$ and $\bm{c}$, this conditional mean equals the constant $\bm{x}_1-\bm{\mu}(\bm{c})$ along the path. The probability-flow ODE in~\eqref{eq:pf_ode} therefore has a constant right-hand side, so its exact solution is the straight line in~\eqref{eq:straight_line}. A constant-velocity ODE is integrated without error by one Euler step, which gives the zero-discretization claim. The reverse-time SDE that shares the same time marginals adds an independent Brownian increment at every step. Its per-step noise variance does not vanish as the conditional concentrates, so it must remove injected noise across many steps and retains residual spread around $\bm{\mu}(\bm{c})$ at any finite step count.
\end{IEEEproof}

Proposition~\ref{prop:straight_line} characterizes the idealized deterministic limit, and the observed condition refines this picture, because it supplies only a binary building map and coarse geometry, without exact scatterer positions or material parameters. The NLOS APS therefore retains a small but genuine residual uncertainty, and this uncertainty places the task in the perception-distortion regime, where flow matching handles both ends of the regime with one model. When the conditional is sharp, as in LOS, the ODE maps essentially all of the prior to the single mode, and the model then behaves like a sharp regressor without the $\ell_2$ averaging that over-smooths. When residual multi-modality remains, as in NLOS, the noise-to-sample map represents the several modes and yields calibrated diversity, whereas a pure distortion-minimizing regressor instead returns the mean of those modes. This modal average is the over-smoothing failure that flattens the dynamic range and erases the angular structure. The analysis therefore predicts sharper conditional modes than a noise-injecting baseline at equal backbone and step budget, and hence lower $\mathcal{W}_1$ and a stronger beam sweep, a pair of predictions that the experiments in Sec.~\ref{sec:experiments} confirm directly.

\subsection{Per-Metric Estimator Portfolio}
\label{sec:method_deploy}

One trained model serves every deployment need through the choice of readout, and each metric receives its Bayes-appropriate estimator. For distributional fidelity the sampling readout is used unchanged, since the transported samples carry the conditional law itself. For a point estimate the receiver needs a minimum mean-squared-error (MMSE) estimate, which is the posterior mean of the APS. RadioDiff-v2 forms it by averaging $K$ conditional samples,
\begin{align}
  \hat{\bm{x}}_0^{\mathrm{MMSE}} = \frac{1}{K}\sum_{k=1}^{K}
  \bm{x}_0^{(k)}, \qquad \bm{x}_0^{(k)} \sim p_{w}(\bm{x}_0 \mid \bm{c}),
  \label{eq:posterior_mean}
\end{align}
which approximates $\mathbb{E}[\bm{x}_0\mid\bm{c}]$ as $K$ grows. The clean-signal head supplies a second, cheaper route to the same quantity. The reconstruction term in the training loss makes this head a conditional-mean regressor operating on a noisy probe. Evaluating it at the noise end of the path therefore requires no target information,
\begin{align}
  \hat{\bm{x}}_0^{\mathrm{head}} = \frac{1}{R}\sum_{r=1}^{R}
  \mathrm{head}_{x_0}\!\big(t_{\mathrm{r}}\,\bm{\epsilon}_r,\; t_{\mathrm{r}},\; \bm{c}\big),
  \qquad \bm{\epsilon}_r \sim \mathcal{N}(\bm{0},\bm{I}),
  \label{eq:x0head}
\end{align}
with $t_{\mathrm{r}}$ close to one and $R$ antithetic draws. A convex blend $\lambda\,\hat{\bm{x}}_0^{\mathrm{head}} + (1-\lambda)\,\hat{\bm{x}}_0^{\mathrm{MMSE}}$ serves the per-bin metrics, with $\lambda$ chosen per metric. For beam selection the base station needs directions to probe, and the correct decision rule depends on the probing budget $B$. For a small budget the portfolio scores each codebook beam by its gain on the expected spectrum and picks the top beams. For a larger budget it selects the set greedily,
\begin{align}
  \mathcal{B}_{j} = \mathcal{B}_{j-1} \cup \Big\{ \arg\max_{b}\;
  \mathbb{E}_{k}\big[ \max\!\big( u_{b}^{(k)},\,
  \max_{b' \in \mathcal{B}_{j-1}} u_{b'}^{(k)} \big) \big] \Big\},
  \label{eq:beam_set}
\end{align}
where $u_{b}^{(k)}$ is the gain of beam $b$ on sample $k$. Each greedy step adds the beam with the largest expected improvement of the realized best gain. The selected set therefore covers distinct plausible arrivals instead of repeating the dominant one. This objective is monotone submodular, which makes the greedy set near-optimal. The portfolio hyper-parameters include $t_{\mathrm{r}}$, $R$, $K$, $w$, $\lambda$, and the per-budget beam rule. All are selected on held-out validation links from the training environments and frozen before any test evaluation. A regressor that outputs only $\bm{\mu}(\bm{c})$ supplies a point estimate but no distribution, no diverse beam set, and no likelihood, so it cannot host such a portfolio.

\begin{table*}[ht]
\centering
\renewcommand{\arraystretch}{1.4}
\setlength{\tabcolsep}{2.9pt}
\caption{Comparison with baselines on the zero-shot test set over $20$ unseen environments. Distributional and per-bin metrics use a $50{,}000$-link environment-stratified subset, the NLOS beam block uses $20{,}000$ NLOS links, and the LOS block uses the full test; every method is scored on identical links. Portfolio readout hyper-parameters are selected on validation data only and frozen before testing. All values in dB or degrees as indicated; arrows give the optimisation direction. Best is in \textbf{bold} and second-best is \underline{underlined}. The final row reports the RadioDiff-v2 portfolio against the strongest baseline in each column.}
\label{tab:main}
{\footnotesize
\resizebox{0.95\linewidth}{!}{
\begin{tabular}{@{}l|c|c|c|c|c|c|c|c|c|c|c@{}}
\hline
 & Distributional & \multicolumn{3}{c|}{Per-bin distortion} & \multicolumn{4}{c|}{NLOS beam sweep (dB)} & \multicolumn{3}{c@{}}{LOS communication} \\
Method & $\mathcal{W}_1$ (dB) $\downarrow$ & NMSE $\downarrow$ & PSNR $\uparrow$ & SSIM $\uparrow$ & @1 $\downarrow$ & @4 $\downarrow$ & @8 $\downarrow$ & AngSp.$^{\circ}$ $\downarrow$ & $L_{\mathrm{BF}@1}$ $\downarrow$ & AngSp.$^{\circ}$ $\downarrow$ & DoA$^{\circ}$ $\downarrow$ \\
\hline
RME-GAN~\cite{zhang2023rme}          & 2.53          & 0.217             & 22.64          & 0.584          & 12.47            & \underline{7.28} & \underline{4.60} & 26.2       & 0.23          & 2.5          & 4.5 \\
MS-Areg~\cite{huang2026map2aps}      & 5.06          & \underline{0.199} & \underline{23.28} & \underline{0.668} & \underline{11.57} & 8.10        & 5.85          & 27.6          & \underline{0.05} & 3.0       & \underline{2.5} \\
RadioDiff~\cite{wang2024radiodiff}   & 1.97          & 0.312             & 22.99          & 0.611          & 14.23            & 10.98         & 8.50          & \underline{10.8} & 0.08       & \underline{2.3} & 2.9 \\
RadioDiff-v2 sampling                & \underline{0.39} & 0.351          & 22.68          & 0.576          & 13.95            & 10.90         & 8.60          & 11.1          & \textbf{0.02} & \textbf{1.9} & \textbf{1.9} \\
\textbf{RadioDiff-v2 portfolio (Ours)} & \textbf{0.39}  & \textbf{0.184} & \textbf{25.12} & \textbf{0.709} & \textbf{11.33}  & \textbf{4.29} & \textbf{2.43} & \textbf{7.60} & \textbf{0.02} & \textbf{1.9} & \textbf{1.9} \\
\hline
\rowcolor{gray!15}
Gain vs. best baseline & \textcolor{red}{$-80.2\%$} & \textcolor{red}{$-7.5\%$} & \textcolor{red}{$+1.84$} & \textcolor{red}{$+6.1\%$} & \textcolor{red}{$-2.1\%$} & \textcolor{red}{$-41.1\%$} & \textcolor{red}{$-47.2\%$} & \textcolor{red}{$-29.6\%$} & \textcolor{red}{$-60.0\%$} & \textcolor{red}{$-17.4\%$} & \textcolor{red}{$-24.0\%$} \\
\hline
\end{tabular}}
}
\end{table*}
\section{Experiments}
\label{sec:experiments}
\subsection{Generative-MAP Localization}
\label{sec:method_loc}

The learned conditional density doubles as a Bayesian likelihood for localization, which a regressor cannot supply. Given an observed APS $\bm{y}$ measured at an unknown receiver, each candidate position $\bm{p}$ defines a condition $\bm{c}(\bm{p})$. The known building map and the geometry to the serving transmitter induce this condition. Scoring $\bm{y}$ under that condition by the flow-matching loss yields the negative log-likelihood up to a constant. The maximum-a-posteriori (MAP) estimate is the position that minimizes this score,
\begin{align}
  \mathcal{L}_{\mathrm{FM}}(\bm{y}\mid\bm{c}(\bm{p}))
  &= \mathbb{E}_{\bm{\epsilon},t}
  \big\lVert \bm{v}_\phi(\bm{y}_t,t,\bm{c}(\bm{p})) - (\bm{\epsilon}-\bm{y}) \big\rVert_2^2,
  \label{eq:loc_score} \\
  \bm{p}^\star &= \arg\min_{\bm{p}}\;
  \mathcal{L}_{\mathrm{FM}}\big(\bm{y}\mid\bm{c}(\bm{p})\big),
  \label{eq:loc_single}
\end{align}
with $\bm{y}_t=(1-t)\bm{y}+t\bm{\epsilon}$ and the true position excluded from the search. The score in~\eqref{eq:loc_score} rewards positions whose conditional density explains the observed multipath pattern, so it uses the full angular structure rather than a single summary. When the receiver is heard by $M$ base stations, the per-station observations are conditionally independent given the position, so their likelihoods multiply and their log-likelihoods add. The joint estimate sums the per-station scores,
\begin{align}
  \bm{p}^\star = \arg\min_{\bm{p}}\;
  \sum_{m=1}^{M} \mathcal{L}_{\mathrm{FM}}\big(\bm{y}_m \mid \bm{c}_m(\bm{p})\big),
  \label{eq:loc_multi}
\end{align}
where $\bm{y}_m$ and $\bm{c}_m(\bm{p})$ are the observation and the condition for the $m$-th station. The per-station scores trace likelihood ridges along their respective bearings, and the ridges intersect at the true position, so the joint objective triangulates. The single-observation problem is weakly determined, so the decisive gains come from this fusion and require multi-station coverage. The localization accuracy of both forms is reported in Sec.~\ref{sec:experiments}.

\subsection{Experimental Setup}
\label{sec:exp-setup}
We evaluated RadioDiff-v2 on a large-scale angular radio map corpus of 99 environments~\cite{huang2026map2aps}, each containing on the order of $50{,}000$ transmitter-to-receiver links. Every link supplies a dB-domain APS with $N_\theta = 180$ angular bins as the prediction target. The condition consists of the receiver-centred angle-aligned polar building-occupancy map, the transmitter and receiver geometry, and a LOS flag whose angle is withheld. We adopted a strict zero-shot protocol with 79 training environments and 20 disjoint test environments. Training used $3{,}864{,}000$ links, and validation used $60{,}000$ in-distribution links. Every evaluation below uses zero-shot test links from the unseen environments. Each table states its environment-stratified evaluation subset, and every method is scored on identical links. All estimator hyper-parameters of the portfolio were selected on the validation links only and frozen before any test evaluation. About $15$ to $20$ percent of receivers are heard by at least three transmitters, which yields the multi-base-station setting used for localization.

We compared against four baselines that span the relevant method families, and RadioDiff is the prior denoising diffusion probabilistic model with a convolutional U-Net backbone~\cite{wang2024radiodiff,ho2020denoising,ronneberger2015u}. RME-GAN is a conditional generative adversarial network trained with an $\ell_1$ and adversarial loss~\cite{zhang2023rme,creswell2018generative}, and MS-Areg is the multi-scale regression baseline from the Map2APS benchmark, trained with the mean-squared error~\cite{huang2026map2aps}. COST231 is the analytical path-loss model whose angular analogue steers to the geometric receiver-to-transmitter bearing~\cite{cost231,wahl2005dominant}. All learned baselines used the same condition and the same zero-shot split as RadioDiff-v2, so every reported difference reflects the model rather than the data.

The evaluation uses metrics from two regimes that the perception-distortion view keeps separate~\cite{blau2018perception}, and distributional fidelity is measured by the Wasserstein-1 distance $\mathcal{W}_1$ in the dB domain~\cite{peyre2019computational}. Per-bin distortion is measured by the NMSE, the peak signal-to-noise ratio (PSNR), and the structural similarity index (SSIM)~\cite{wang2004image}. Communication quality is measured by the beam-sweep loss $L_{\mathrm{sweep}@B}$, defined as the oracle-best beamforming gain minus the realized gain over $B$ proposed beams. The angular-spread error, the direction-of-arrival (DoA) error, and the single-beam loss $L_{\mathrm{BF}@1}$ complete the communication metrics. Localization quality is measured by the median pixel error on the zero-shot test set, and lower values are better for every metric except PSNR and SSIM.

\subsection{Distributional Fidelity and Per-Bin Distortion}
\label{sec:exp-fidelity}
Table~\ref{tab:main} reports the full comparison across distributional fidelity, per-bin distortion, and both communication regimes. This subsection reads its distributional and per-bin columns. The sampling readout attains a dB-domain Wasserstein-1 of $0.39$, against $1.97$ for the prior diffusion baseline and $2.53$ and $5.06$ for the adversarial and regression baselines. This margin corresponds to a $5.1\times$ to $13.0\times$ improvement. The portfolio readout of the same model wins the per-bin columns as well, with an NMSE of $0.184$ against the regressor's $0.199$, a PSNR of $25.12$ against $23.28$, and an SSIM of $0.709$ against $0.668$. The qualitative comparison in Fig.~\ref{fig:aps_qualitative} explains the distributional gap. RadioDiff-v2 consistently places its peak at the correct angle, whereas the regressor and the adversarial baseline stay broad and noisy near their own peak with no clean dominant lobe, even once each curve is referenced to its own maximum. The flow-matching model recovers the multipath angular structure that the downstream tasks consume, and its clean-signal readout supplies the conditional mean when a point estimate is the goal.

The two RadioDiff-v2 rows of Table~\ref{tab:main} make the perception-distortion structure of the task explicit, because the trade-off binds a single readout, not a single model~\cite{blau2018perception}. The sampling readout matches the conditional distribution and therefore pays a per-bin penalty, exactly as the theory predicts. The clean-signal readout of the same network is instead a conditional-mean regressor, and its blend with the posterior mean beats the strongest regressor on NMSE by $7.5$ percent, on PSNR by $1.84$ dB, and on SSIM by $6.1$ percent. The regressors cannot make the reverse move, because a conditional mean carries no distribution to recover, and their Wasserstein-1 stays an order of magnitude worse. A fairness check gave the prior diffusion baseline the same clean-signal readout, and the readout improves that baseline to an NMSE of $0.191$, which still trails the portfolio. The more accurate conditional distribution therefore yields the more accurate conditional mean, and one model now occupies both ends of the perception-distortion frontier.

\FloatBarrier

\subsection{NLOS Communication}
\label{sec:exp-nlos}
The NLOS beam-sweep columns of Table~\ref{tab:main} address the regime where the conditional law over candidate beams is genuinely multi-modal, and here the decision rule matters as much as the model. The Bayes-optimal portfolio selection of~\eqref{eq:beam_set} attains an eight-beam sweep loss of $2.43$ dB, against $4.60$ dB for the adversarial baseline, $5.85$ dB for the regression baseline, and $8.50$ dB for the prior diffusion baseline. The four-beam sweep shows the same ordering at $4.29$ dB against $7.28$ dB for the closest baseline. The single-probe sweep is the budget where a sharp point estimate helps most, and the expected-spectrum rule still wins it at $11.33$ dB against the regressor's $11.57$ dB. The greedy set also repairs the angular-spread error to $7.60$ degrees, below the prior diffusion baseline at $10.8$ and far below the regressors near $27$. The naive rule that reads one beam per sample reaches only $13.95$ dB at a single probe. The gain from the Bayes rules is thus decision-theoretic and costs no retraining.

Table~\ref{tab:nlos_point} isolates the NLOS point estimate, where the clean-signal readout of~\eqref{eq:x0head} reaches a single-beam loss of $11.23$ dB and a DoA error of $47.4$ degrees. The expected-spectrum rule over $K=16$ samples attains $11.34$ dB and $46.9$ degrees, and both run ahead of the strongest regressor, which sits at $11.57$ dB and $48.1$ degrees. The expected-spectrum estimator keeps the angular-spread error at $7.60$ degrees at the same time, so the accuracy does not come from over-smoothing, and the prior diffusion baseline trails on every column. The advantage is not generic to generative sampling; it follows from the more accurate conditional law and the matched readouts. One model consequently serves both roles, the diverse sweep through its samples and the accurate point estimate through its portfolio.

\begin{table}[ht]
\centering
\renewcommand{\arraystretch}{1.35}
\setlength{\tabcolsep}{10pt}
\caption{NLOS point estimate on $20{,}000$ zero-shot NLOS links, identical for every method. Best is in \textbf{bold}, second-best is \underline{underlined}. The Gain row is the RadioDiff-v2 expected-spectrum estimator against the strongest baseline in each column.}
\label{tab:nlos_point}
\resizebox{\columnwidth}{!}{
\begin{tabular}{@{}l|c|c|c@{}}
\hline
Estimator & DoA err $\downarrow$ & $L_{\mathrm{BF}@1}$ (dB) $\downarrow$ & AngSpread err $\downarrow$ \\
\hline
RME-GAN~\cite{zhang2023rme}          & 52.0          & 12.47         & 26.2 \\
MS-Areg~\cite{huang2026map2aps}       & 48.1          & 11.57         & 27.6 \\
RadioDiff~\cite{wang2024radiodiff}    & 59.2          & 14.23         & \underline{10.8} \\
RadioDiff-v2 clean-signal readout     & \underline{47.4} & \textbf{11.23} & 26.6 \\
\textbf{RadioDiff-v2 expected spectrum ($K{=}16$)} & \textbf{46.9} & \underline{11.34} & \textbf{7.60} \\
\hline
\rowcolor{gray!15}
Gain over best baseline & \textcolor{red}{$-2.5\%$} & \textcolor{red}{$-2.0\%$} & \textcolor{red}{$-29.6\%$} \\
\hline
\end{tabular}}
\end{table}

\FloatBarrier

\subsection{LOS Communication and the Geometric Baseline}
\label{sec:exp-los}
The LOS columns of Table~\ref{tab:main} report the regime where the conditional law is concentrated and the task is well posed. RadioDiff-v2 is best on every LOS metric, with a single-beam loss of $0.02$ dB, an angular-spread error of $1.9$ degrees, and a DoA error of $1.9$ degrees. All methods sit near the optimum here, because the dominant path is visible and the spectrum is nearly a single lobe. The margin over the regressors is therefore small in absolute terms, yet RadioDiff-v2 is the tightest on all three columns. The result confirms that flow matching behaves like a sharp regressor in the concentrated regime without paying the $\ell_2$ averaging penalty that over-smooths the spectrum.

The geometric COST231 baseline explains why learning is essential only in NLOS, and it was evaluated on $60{,}000$ zero-shot links against the true-APS oracle beam, with its angular analogue steering a single beam to the geometric receiver-to-transmitter bearing~\cite{cost231,wahl2005dominant}. In LOS the geometric bearing is the optimal beam, with a single-beam loss of $0.00$ dB, a DoA error of $0.6$ degrees, and a beam identical to the oracle on $99.0$ percent of links. In NLOS the same rule fails, because it points at the blocked direct path, and its single-beam loss grows to $12.12$ dB, its DoA error to $48.0$ degrees, and its beam matches the oracle on only $4.6$ percent of links. The RadioDiff-v2 portfolio lowers the NLOS single-beam loss to $11.34$ dB, below the geometric $12.12$ dB, and its greedy sweep drives the eight-beam loss down to $2.43$ dB. The geometric baseline thus marks the trivial LOS optimum and the NLOS failure that motivates a learned conditional model.

\subsection{Localization}
\label{sec:exp-loc}
Table~\ref{tab:localization} reports localization, the second contribution. The correct way to localize with a generative model is generative-MAP estimation. We score a candidate position by the flow-matching loss of the observed APS under that position's condition and select the minimum with the query position excluded~\cite{yapar2022locunet}. Single-station generative-MAP attains a median error of $62.6$ pixels, which beats the oracle true-APS fingerprint nearest-neighbour at $78.0$ pixels by about $20$ percent. It also improves on the naive use of RadioDiff-v2 as a plain fingerprint at $83.6$ pixels by about $25$ percent. The generative likelihood draws on the learned conditional density and the known building map. The regressors lack this capability entirely, since they expose no conditional likelihood.

\begin{table}[ht]
\centering
\renewcommand{\arraystretch}{1.35}
\setlength{\tabcolsep}{12pt}
\caption{Localization median pixel error on the zero-shot test set. Best is in \textbf{bold}, second-best is \underline{underlined}. The upper block is single-station, and the lower block is multi-base-station joint generative-MAP.}
\label{tab:localization}
\resizebox{\columnwidth}{!}{
\begin{tabular}{@{}l|c@{}}
\hline
Single-station method & Median error (px) $\downarrow$ \\
\hline
RadioDiff-v2 as fingerprint kNN      & 83.6 \\
Oracle true-APS fingerprint kNN      & \underline{78.0} \\
\textbf{RadioDiff-v2 generative-MAP} & \textbf{62.6} \\
\hline
\rowcolor{gray!15}
Gain over oracle fingerprint & \textcolor{red}{$-19.7\%$} \\
\hline
\multicolumn{2}{@{}c@{}}{} \\
\hline
\multicolumn{2}{@{}l@{}}{Multi-station fusion (median px), $M = 1 \,/\, 2 \,/\, 3 \,/\, 4 \,/\, 5 \,/\, 6$} \\
\hline
MS-Areg map $+$ kNN fusion & 68.5 / 56.3 / 46.2 / 42.6 / 41.6 / 41.4 \\
\textbf{RadioDiff-v2 joint generative-MAP} & 74.7 / \textbf{54.1} / \textbf{31.0} / \textbf{20.6} / \textbf{19.2} / \textbf{17.1} \\
\hline
\rowcolor{gray!15}
Gain ($M{=}4$) over regressor-map fusion & \textcolor{red}{$-51.6\%$} \\
\hline
\end{tabular}}
\end{table}

\begin{figure}[ht]
\centering
\begin{subfigure}[b]{0.32\columnwidth}
\centering
\includegraphics[width=\linewidth]{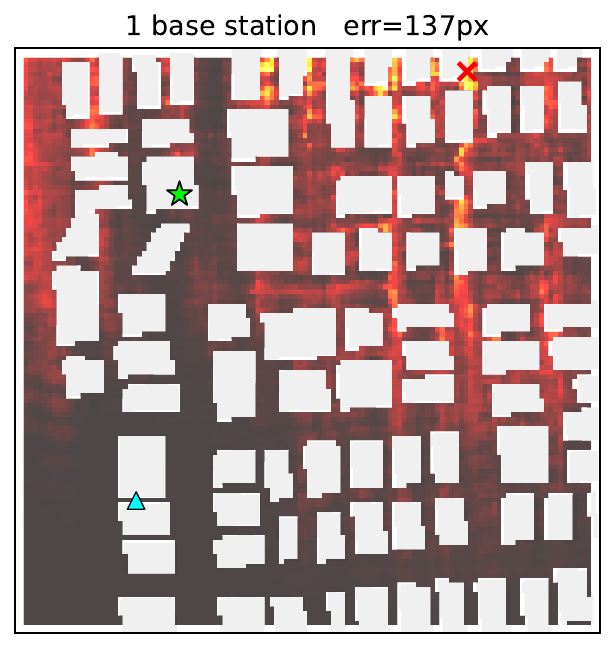}
\caption{One station.}
\label{fig:loc_map_1bs}
\end{subfigure}%
\hfill
\begin{subfigure}[b]{0.32\columnwidth}
\centering
\includegraphics[width=\linewidth]{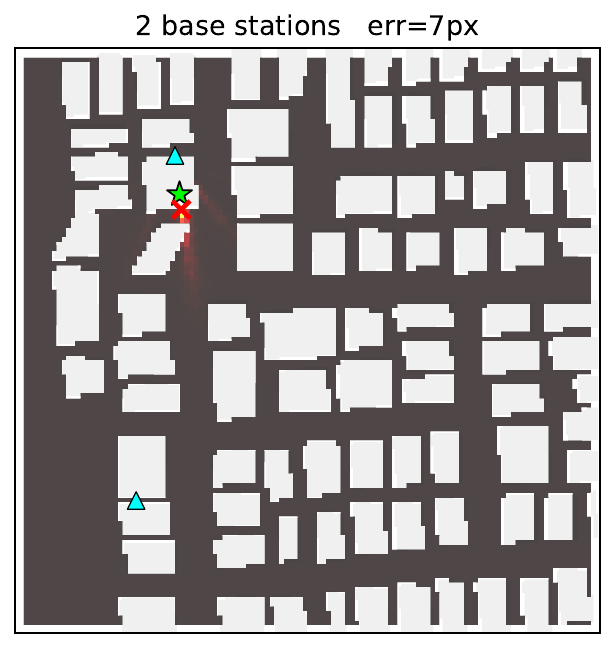}
\caption{Two stations.}
\label{fig:loc_map_2bs}
\end{subfigure}%
\hfill
\begin{subfigure}[b]{0.32\columnwidth}
\centering
\includegraphics[width=\linewidth]{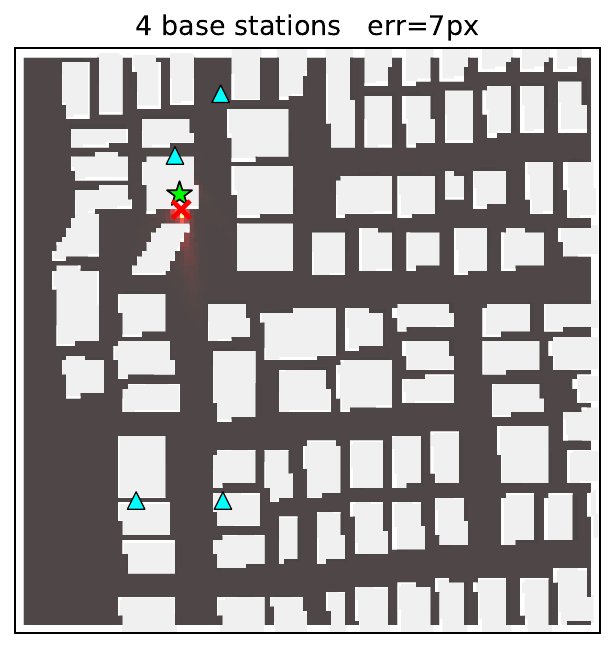}
\caption{Four stations.}
\label{fig:loc_map_4bs}
\end{subfigure}
\caption{Multi-base-station likelihood maps over space for one query with one, two, and four base stations. Each station contributes a likelihood ridge along its bearing, and as stations are added the ridges intersect and the posterior sharpens onto the true position.}
\label{fig:localization_map}
\end{figure}

\FloatBarrier

The lower block of Table~\ref{tab:localization} reports multi-base-station fusion on $577$ fixed queries over all $20$ test environments, and Fig.~\ref{fig:localization_map} shows the posterior sharpening onto the true position. Independent per-station likelihoods multiply, so we sum the per-station flow-matching losses with uniform weights, and the median error falls monotonically from $74.7$ pixels with one station to $20.6$ at four and $17.1$ at six. The fair alternative gave the regressor the same information, fusing nearest-neighbour matches against its predicted maps from the same $M$ stations. This fusion saturates near $41$ pixels from three stations onward, because fingerprint matching carries no likelihood geometry and the extra stations add no triangulation. The generative likelihood instead keeps improving, since each station contributes a ridge along its bearing and the ridges intersect at the true position, and the advantage reaches $2.4\times$ at six stations. Two scope boundaries remain: single-observation localization is weakly determined, and at one station the regressor fusion is slightly ahead on this multi-station query subset. The broader single-station protocol in the upper block still favours generative-MAP by $20$ percent, and multi-station coverage is also required, which the census places at $15$ to $20$ percent of receivers for three or more transmitters. The recommended recipe is therefore the plain generative-MAP for a single station and the uniform-sum fusion across stations.

\subsection{Ablation: Inference-Axis Sampler Selection}
\label{sec:exp-ablation}
Table~\ref{tab:sampler} ablates the classifier-free guidance scale on a test subset and shows that the sampler is an inference-axis design choice~\cite{ho2022classifier}. The guidance scale that minimizes per-bin NMSE is not the scale that minimizes the Wasserstein-1 distance. A greedy NMSE choice at $w = 2$ reaches an NMSE of $0.306$ but a Wasserstein-1 of $4.47$. The Wasserstein-optimal choice at $w = 1$ instead attains a Wasserstein-1 of $0.63$ at a small NMSE cost of $0.334$. Selecting the sampler by the Wasserstein-1 objective thus improves distributional fidelity by about $7\times$ on this subset with no retraining. We deploy the sampler at $w = 1$ under this objective, which is consistent with the sampling-readout fidelity of $0.39$ and the beam-sweep advantage in Table~\ref{tab:main}. The frontier numbers are reported on their own subset and are kept separate from the headline results to avoid cross-quoting.

\begin{table}[ht]
\centering
\renewcommand{\arraystretch}{1.35}
\setlength{\tabcolsep}{12pt}
\caption{Classifier-free guidance frontier on a zero-shot test subset. The NMSE-optimal and Wasserstein-optimal operating points differ, which motivates the Wasserstein-1 sampler selection. Best is in \textbf{bold}, second-best is \underline{underlined}.}
\label{tab:sampler}
\resizebox{\columnwidth}{!}{
\begin{tabular}{@{}c|c|c|c@{}}
\hline
Guidance $w$ & NMSE $\downarrow$ & $\mathcal{W}_1$ (dB) $\downarrow$ & sweep@8 $\downarrow$ \\
\hline
$2.0$ (NMSE-greedy)        & \textbf{0.306} & 4.47          & 5.18 \\
$1.5$                      & \underline{0.303} & \underline{2.89} & \underline{4.64} \\
$1.0$ ($\mathcal{W}_1$-optimal) & 0.334     & \textbf{0.63} & \textbf{4.14} \\
\hline
\rowcolor{gray!15}
Gain ($w{=}1$) over $w{=}2$ & \textcolor{red}{$+9.2\%$} & \textcolor{red}{$-85.9\%$} & \textcolor{red}{$-20.1\%$} \\
\hline
\end{tabular}}
\end{table}

\FloatBarrier

\section{Conclusion}
\label{sec:conclusion}
We have reframed angular radio map prediction as a perception-distortion problem and proposed RadioDiff-v2, a dual-branch one-dimensional diffusion transformer trained with flow matching, whose per-metric estimator portfolio reads every deployment quantity from one model. We have shown that this portfolio leads every baseline on every metric of the zero-shot test, combining a thirteen-fold distributional advantage with regression-grade per-bin accuracy, Bayes-optimal beam selection, and multi-station likelihood localization that continues to improve where regressor-map fusion saturates. One zero-shot map therefore serves spectrum reconstruction, multi-beam selection, and receiver localization for environment-aware 6G networks. Future work will extend the conditional density to wideband and elevation-resolved spectra so that frequency-selective and three-dimensional beam management fit within the same generative framework.

\ifCLASSOPTIONcaptionsoff
  \newpage
\fi

\bibliographystyle{IEEEtran}
\bibliography{ref}

@article{huang2026map2aps,
  title={Map2APS: A Physically Grounded Benchmark for Direct Angle Power Spectrum Prediction from Urban Geometry},
  author={Huang, Junxi and Wang, Xiucheng and Cheng, Nan and Wang, Kailong and Sun, Ruijin and Yin, Zhisheng},
  journal={arXiv preprint arXiv:2605.14989},
  year={2026}
}

@article{zeng2021toward,
  title={Toward environment-aware {6G} communications via channel knowledge map},
  author={Zeng, Yong and Xu, Xiaoli},
  journal={IEEE Wireless Commun.},
  volume={28},
  number={3},
  pages={84--91},
  year={2021}
}

@inproceedings{song2020denoising,
  title={Denoising diffusion implicit models},
  author={Song, Jiaming and Meng, Chenlin and Ermon, Stefano},
  booktitle={Proc. Int. Conf. Learn. Represent. (ICLR)},
  pages = {1-12},
  year={2021}
}

@inproceedings{LDM,
      title={High-Resolution Image Synthesis with Latent Diffusion Models},
      author={Robin Rombach and Andreas Blattmann and Dominik Lorenz and Patrick Esser and Björn Ommer},
      booktitle={Proc. IEEE/CVF Conf. Comput. Vis. Pattern Recognit. (CVPR)},
      pages={10674--10685},
      year={2022}
}

@article{zhang2024generative,
  title={Generative AI on SpectrumNet: An Open Benchmark of Multiband 3D Radio Maps},
  author={Zhang, Shuhang and Jiang, Shuai and Lin, Wanjie and Fang, Zheng and Liu, Kangjun and Zhang, Hongliang and Chen, Ke},
  journal={{IEEE} Trans. Cognit. Commun. Networking},
  volume={11},
  number={2},
  pages={886--901},
  year={2025},
  publisher={IEEE}
}

@article{wang2020indoor,
  title={Indoor radio map construction and localization with deep Gaussian processes},
  author={Wang, Xiangyu and Wang, Xuyu and Mao, Shiwen and Zhang, Jian and Periaswamy, Senthilkumar CG and Patton, Justin},
  journal={{IEEE} Internet Things J.},
  volume={7},
  number={11},
  pages={11238--11249},
  year={2020},
  publisher={IEEE}
}

@inproceedings{wahl2005dominant,
  title={Dominant path prediction model for urban scenarios},
  author={Wahl, Ren{\'e} and W{\"o}lfle, Gerd and Wertz, Philipp and Wildbolz, Pascal and Landstorfer, Friedrich},
  booktitle={14th IST mobile and wireless communications summit},
  pages={1--5},
  year={2005}
}

@article{6g,
    author = {Cheng, Nan and Chen, Fangjiong and Chen, Wen and Cheng, Zhimi and Yang, Qinghai and Li, Changle and Shen, Xuemin},
    title =  {{6G} omni-scenario on-demand services provisioning: vision, technology and prospect(in Chinese)},
    journal = {Sci Sin Inform},
    year = {2024},
    volume={54},
    number={5},
    pages={1025--1054}
}

@article{ho2020denoising,
  title={Denoising diffusion probabilistic models},
  author={Ho, Jonathan and Jain, Ajay and Abbeel, Pieter},
  journal={Advances in neural information processing systems (NeurIPS)},
  volume={33},
  pages={6840--6851},
  year={2020}
}

@article{liu2023exploiting,
  title={Exploiting Radio Fingerprints for Simultaneous Localization and Mapping},
  author={Liu, Ran and Lau, Billy Pik Lik and Ismail, Khairuldanial and Chathuranga, Achala and Yuen, Chau and Yang, Simon X and Guan, Yong Liang and Mao, Shiwen and Tan, U-Xuan},
  journal={{IEEE} Pervasive Comput.},
  year={2023},
  volume={22},
  number={3},
  pages={38--46},
  publisher={IEEE}
}

@article{sun2024generative,
  title={Generative AI for Deep Reinforcement Learning: Framework, Analysis, and Use Cases},
  author={Sun, Geng and Xie, Wenwen and Niyato, Dusit and Mei, Fang and Kang, Jiawen and Du, Hongyang and Mao, Shiwen},
  journal={IEEE Wireless Commun.},
  volume={32},
  number={3},
  pages={186--195},
  year={2025},
  publisher={IEEE}
}

@article{zhang2023rme,
  title={{RME}-{GAN}: A learning framework for radio map estimation based on conditional generative adversarial network},
  author={Zhang, Songyang and Wijesinghe, Achintha and Ding, Zhi},
  journal={{IEEE} Internet Things J.},
  year={2023},
  volume={10},
  number={20},
  pages={18016-18027},
  publisher={IEEE}
}

@inproceedings{li2022radionet,
  title={{RadioNet}: Robust deep-learning based radio fingerprinting},
  author={Li, Haipeng and Gupta, Kaustubh and Wang, Chenggang and Ghose, Nirnimesh and Wang, Boyang},
  booktitle={Proceedings of the 2022 IEEE Conference on Communications and Network Security (CNS)},
  pages={190--198},
  year={2022}
}

@inproceedings{ronneberger2015u,
  title={U-net: Convolutional networks for biomedical image segmentation},
  author={Ronneberger, Olaf and Fischer, Philipp and Brox, Thomas},
  booktitle={Medical image computing and computer-assisted intervention--MICCAI 2015: 18th international conference, Munich, Germany, October 5-9, 2015, proceedings, part III 18},
  pages={234--241},
  year={2015},
  organization={Springer}
}

@article{vaswani2017attention,
  title={Attention is all you need},
  author={Vaswani, Ashish and Shazeer, Noam and Parmar, Niki and Uszkoreit, Jakob and Jones, Llion and Gomez, Aidan N and Kaiser, {\L}ukasz and Polosukhin, Illia},
  journal={Advances in neural information processing systems (NeurIPS)},
  volume={30},
  year={2017}
}

@inproceedings{nichol2021glide,
  title={Glide: Towards photorealistic image generation and editing with text-guided diffusion models},
  author={Nichol, Alex and Dhariwal, Prafulla and Ramesh, Aditya and Shyam, Pranav and Mishkin, Pamela and McGrew, Bob and Sutskever, Ilya and Chen, Mark},
  booktitle={Proc. Int. Conf. Mach. Learn. (ICML)},
  series={PMLR},
  volume={162},
  pages={16784--16804},
  year={2022}
}

@inproceedings{wu2024ckmimagenet,
  title={CKMImageNet: A Comprehensive Dataset to Enable Channel Knowledge Map Construction via Computer Vision},
  author={Wu, Di and Wu, Zijian and Qiu, Yuelong and Fu, Shen and Zeng, Yong},
  booktitle={2024 IEEE/CIC International Conference on Communications in China (ICCC Workshops)},
  pages={114--119},
  year={2024},
  organization={IEEE}
}

@article{creswell2018generative,
  title={Generative adversarial networks: An overview},
  author={Creswell, Antonia and White, Tom and Dumoulin, Vincent and Arulkumaran, Kai and Sengupta, Biswa and Bharath, Anil A},
  journal={IEEE Signal Process. Mag.},
  volume={35},
  number={1},
  pages={53--65},
  year={2018},
  publisher={IEEE}
}

@article{croitoru2023diffusion,
  title={Diffusion models in vision: A survey},
  author={Croitoru, Florinel-Alin and Hondru, Vlad and Ionescu, Radu Tudor and Shah, Mubarak},
  journal={IEEE Trans. Pattern Anal. Mach},
  volume={45},
  number={9},
  pages={10850--10869},
  year={2023},
  publisher={IEEE}
}

@article{li2024radiogat,
  title={RadioGAT: A joint model-based and data-driven framework for multi-band radiomap reconstruction via graph attention networks},
  author={Li, Xiaojie and Zhang, Songyang and Li, Hang and Li, Xiaoyang and Xu, Lexi and Xu, Haigao and Mei, Hui and Zhu, Guangxu and Qi, Nan and Xiao, Ming},
  journal={IEEE Trans. Wireless Commun.},
  year={2024},
  volume={23},
  number={11},
  pages={17777--17792},
  publisher={IEEE}
}

@article{jaensch2024radio,
  title={Radio Map Estimation--An Open Dataset with Directive Transmitter Antennas and Initial Experiments},
  author={Jaensch, Fabian and Caire, Giuseppe and Demir, Beg{\"u}m},
  journal={arXiv preprint arXiv:2402.00878},
  year={2024}
}

@article{wang2025radiodiffk,
  title={RadioDiff-k$^{2}$: Helmholtz Equation Informed Generative Diffusion Model for Multi-Path Aware Radio Map Construction},
  author={Wang, Xiucheng and Zhang, Qiming and Cheng, Nan and Sun, Ruijin and Li, Zan and Cui, Shuguang and Shen, Xuemin},
  journal={IEEE J. Sel. Areas Commun.},
  volume={44},
  pages={2318--2333},
  year={2026},
  doi={10.1109/JSAC.2025.3641105},
  publisher={IEEE}
}

@article{alkhateeb2019deepmimo,
  title={DeepMIMO: A generic deep learning dataset for millimeter wave and massive MIMO applications},
  author={Alkhateeb, Ahmed},
  journal={arXiv preprint arXiv:1902.06435},
  year={2019}
}

@article{wang2025radiodiffinverse,
  title={RadioDiff-Inverse: Diffusion Enhanced Bayesian Inverse Estimation for ISAC Radio Map Construction},
  author={Wang, Xiucheng and Fang, Zhongsheng and Cheng, Nan and Sun, Ruijin and Zhou, Haibo and Su, Zhou and Li, Zan and Shen, Xuemin},
  journal={IEEE Trans. Wireless Commun.},
  volume={25},
  pages={14611--14626},
  year={2026},
  doi={10.1109/TWC.2026.3677479},
  publisher={IEEE}
}

@article{huang2023decoupled,
  title={Decoupled diffusion models with explicit transition probability},
  author={Huang, Yuhang and Qin, Zheng and Liu, Xinwang and Xu, Kai},
  journal={arXiv preprint arXiv:2306.13720},
  year={2023}
}

@article{wang2004image,
  title={Image quality assessment: from error visibility to structural similarity},
  author={Wang, Zhou and Bovik, Alan C and Sheikh, Hamid R and Simoncelli, Eero P},
  journal={{IEEE} Trans. Image Processing},
  volume={13},
  number={4},
  pages={600--612},
  year={2004},
  publisher={IEEE}
}

@inproceedings{song2020score,
  title={Score-based generative modeling through stochastic differential equations},
  author={Song, Yang and Sohl-Dickstein, Jascha and Kingma, Diederik P and Kumar, Abhishek and Ermon, Stefano and Poole, Ben},
  booktitle={Proc. Int. Conf. Learn. Represent. (ICLR)},
  year={2021}
}

@article{zhou2017electromagnetic,
  title={Electromagnetic scattering laws in Weyl systems},
  author={Zhou, Ming and Ying, Lei and Lu, Ling and Shi, Lei and Zi, Jian and Yu, Zongfu},
  journal={Nature Commun.},
  volume={8},
  number={1},
  pages={1388},
  year={2017},
  publisher={Nature Publishing Group UK London}
}

@article{deschamps1972ray,
  title={Ray techniques in electromagnetics},
  author={Deschamps, Georges A},
  journal={Proc. {IEEE}},
  volume={60},
  number={9},
  pages={1022--1035},
  year={1972},
  publisher={IEEE}
}

@article{kang2025confidence,
  title={Confidence-Regulated Generative Diffusion Models for Reliable AI Agent Migration in Vehicular Metaverses},
  author={Kang, Yingkai and Kang, Jiawen and Wen, Jinbo and Zhang, Tao and Yang, Zhaohui and Niyato, Dusit and Zhang, Yan},
  journal={arXiv preprint arXiv:2505.12710},
  year={2025}
}

@techreport{docomo20165g,
  title={5G Channel Model for bands up to 100 GHz},
  author={Docomo, NTT and others},
  year={2016},
  institution={Technical report}
}

@inproceedings{yapar2022locunet,
  title={Locunet: Fast urban positioning using radio maps and deep learning},
  author={Yapar, {\c{C}}a{\u{g}}kan and Levie, Ron and Kutyniok, Gitta and Caire, Giuseppe},
  booktitle={ICASSP 2022-2022 IEEE International Conference on Acoustics, Speech and Signal Processing (ICASSP)},
  pages={4063--4067},
  year={2022},
  organization={IEEE}
}

@article{levie2021radiounet,
  title={{RadioUNet}: Fast radio map estimation with convolutional neural networks},
  author={Levie, Ron and Yapar, {\c{C}}a{\u{g}}kan and Kutyniok, Gitta and Caire, Giuseppe},
  journal={ {IEEE} Trans. Wireless Commun.},
  volume={20},
  number={6},
  pages={4001--4015},
  year={2021},
  publisher={IEEE}
}

@article{wang2024radiodiff,
  author={Wang, Xiucheng and Tao, Keda and Cheng, Nan and Yin, Zhisheng and Li, Zan and Zhang, Yuan and Shen, Xuemin},
  journal={IEEE Trans. Cognit. Commun. Networking}, 
  title={RadioDiff: An Effective Generative Diffusion Model for Sampling-Free Dynamic Radio Map Construction}, 
  year={2025},
  volume={11},
  number={2},
  pages={738-750}
}

@inproceedings{lipman2023flow,
  title={Flow Matching for Generative Modeling},
  author={Lipman, Yaron and Chen, Ricky T. Q. and Ben-Hamu, Heli and Nickel, Maximilian and Le, Matt},
  booktitle={Proc. Int. Conf. Learn. Represent. (ICLR)},
  year={2023}
}

@inproceedings{liu2023flow,
  title={Flow Straight and Fast: Learning to Generate and Transfer Data with Rectified Flow},
  author={Liu, Xingchao and Gong, Chengyue and Liu, Qiang},
  booktitle={Proc. Int. Conf. Learn. Represent. (ICLR)},
  year={2023}
}

@inproceedings{peebles2023scalable,
  title={Scalable Diffusion Models with Transformers},
  author={Peebles, William and Xie, Saining},
  booktitle={Proc. IEEE/CVF Int. Conf. Comput. Vis. (ICCV)},
  pages={4172--4182},
  year={2023}
}

@misc{ho2022classifier,
  title={Classifier-Free Diffusion Guidance},
  author={Ho, Jonathan and Salimans, Tim},
  year={2022},
  eprint={2207.12598},
  archivePrefix={arXiv},
  primaryClass={cs.LG}
}

@inproceedings{blau2018perception,
  title={The Perception-Distortion Tradeoff},
  author={Blau, Yochai and Michaeli, Tomer},
  booktitle={Proc. IEEE/CVF Conf. Comput. Vis. Pattern Recognit. (CVPR)},
  pages={6228--6237},
  year={2018}
}

@article{peyre2019computational,
  title={Computational Optimal Transport: With Applications to Data Science},
  author={Peyr{\'e}, Gabriel and Cuturi, Marco},
  journal={Found. Trends Mach. Learn.},
  volume={11},
  number={5-6},
  pages={355--607},
  year={2019},
  doi={10.1561/2200000073}
}

@techreport{cost231,
  title={Digital Mobile Radio Towards Future Generation Systems---{COST} Action 231 Final Report},
  author={Damosso, E. and Correia, L. M.},
  institution={European Commission},
  number={EUR 18957},
  address={Brussels, Belgium},
  year={1999}
}

@article{wang2026radiodiff3d,
  title={{RadioDiff}-3{D}: A 3{D}$\times$3{D} Radio Map Dataset and Generative Diffusion Based Benchmark for 6{G} Environment-Aware Communication},
  author={Wang, Xiucheng and Zhang, Qiming and Cheng, Nan and Chen, Junting and Zhang, Zezhong and Li, Zan and Cui, Shuguang and Shen, Xuemin},
  journal={IEEE Trans. Netw. Sci. Eng.},
  volume={13},
  pages={3773--3789},
  year={2026}
}

@article{wang2025iradiodiff,
  title={{iRadioDiff}: Physics-Informed Diffusion Model for Indoor Radio Map Construction and Localization},
  author={Wang, Xiucheng and Yuan, Tingwei and Cao, Yang and Cheng, Nan and Sun, Ruijin and Zhuang, Weihua},
  journal={arXiv preprint arXiv:2511.20015},
  year={2025}
}

@article{wang2026tutorial,
  title={A Tutorial on Learning-Based Radio Map Construction: Data, Paradigms, and Physics-Awareness},
  author={Wang, Xiucheng and Pan, Yuhao and Cheng, Nan and Yapar, {\c{C}}a{\u{g}}kan and Sun, Ruijin and Yin, Zhisheng and Zhou, Conghao and Xu, Wenchao and Zhang, Yuxiang and Zhang, Jianhua and Cui, Shuguang and Shen, Xuemin},
  journal={arXiv preprint arXiv:2603.17499},
  year={2026}
}

@article{wang2026radiodiffflux,
  title={{RadioDiff}-Flux: Efficient Radio Map Construction via Generative Denoise Diffusion Model Trajectory Midpoint Reuse},
  author={Wang, Xiucheng and Zheng, Peilin and Jia, Honggang and Cheng, Nan and Sun, Ruijin and Zhou, Conghao and Shen, Xuemin},
  journal={IEEE Trans. Cognit. Commun. Networking},
  volume={12},
  pages={4882--4895},
  doi={10.1109/TCCN.2025.3641513},
  year={2026}
}

@article{wang2026radiodifffs,
  title={{RadioDiff}-{FS}: Physics-Informed Manifold Alignment in Few-Shot Diffusion Models for High-Fidelity Radio Map Construction},
  author={Wang, Xiucheng and Guo, Zixuan and Cheng, Nan},
  journal={arXiv preprint arXiv:2603.18865},
  year={2026}
}

\end{document}